\newcommand{\stitle}[1]{\vspace{1ex} \noindent{\bf #1}}
\newtheorem{theorem}{Theorem}
\newtheorem{proof}{Proof}
\newtheorem{definition}{Definition}
\newtheorem{lemma}{Lemma}
\newtheorem{example}{Example}
\title{Scalable and Effective Conductance-based Graph Clustering}
\author{
    %Authors
    % All authors must be in the same font size and format.
    Longlong Lin\textsuperscript{\rm 1}, Rong-Hua Li \textsuperscript{\rm 2}, Tao Jia\textsuperscript{\rm 1}\\
}
\title{My Publication Title --- Single Author}
\author {
    Author Name
}
\title{My Publication Title --- Multiple Authors}
\author {
    % Authors
    First Author Name,\textsuperscript{\rm 1,\rm 2}
    Second Author Name, \textsuperscript{\rm 2}
    Third Author Name \textsuperscript{\rm 1}
}
\begin{document}

\maketitle

\begin{abstract}
	Conductance-based graph clustering has been recognized as a fundamental operator in numerous graph analysis applications. Despite the significant success of conductance-based graph clustering, existing algorithms are either hard to obtain satisfactory clustering qualities, or have high time and space complexity to achieve provable clustering qualities. To overcome these limitations, we devise  a powerful \textit{peeling}-based graph clustering framework \textit{PCon}. We show that many existing solutions can be reduced to our framework. Namely, they first define a score function for each vertex, then iteratively remove the vertex with the smallest score.  Finally, they output the result with the smallest conductance during the peeling process. Based on our framework, we propose two novel algorithms \textit{PCon\_core} and \emph{PCon\_de} with linear time and space complexity, which can efficiently and effectively identify clusters from massive graphs with more than a few billion edges. Surprisingly, we prove that \emph{PCon\_de} can identify clusters with near-constant approximation ratio, resulting in an important theoretical improvement over the well-known quadratic Cheeger bound. Empirical results on real-life and synthetic datasets show that our algorithms can achieve 5$\sim$42 times speedup with a high clustering accuracy, while using 1.4$\sim$7.8 times less memory than the baseline algorithms.
	
\end{abstract} 

\section{Introduction}
Graph clustering is an important algorithmic primitive with applications in numerous tasks, including image segmentation \cite{DBLP:conf/cvpr/ShiM97, DBLP:conf/cvpr/TolliverM06}, community detection \cite{Fortunato2009Community,  DBLP:conf/www/LeskovecLM10},  and machine learning \cite{DBLP:conf/nips/BelkinN01, DBLP:conf/icml/BianchiGA20}.  Generally,  graph clustering aims to partition the entire graph into several non-overlapping vertex sets, called clusters, such that the vertices within the same cluster have more connections than the vertices in different clusters.  Many graph clustering methods have been proposed, such as modularity based graph clustering \cite{newman2004fast}, structural graph clustering \cite{DBLP:conf/kdd/XuYFS07}, and cohesive subgraph based clustering \cite{DBLP:conf/icde/ChangQ19}. Perhaps, the most representative graph clustering method is the \textit{conductance}-based clustering \cite{DBLP:conf/focs/AndersenCL06, DBLP:conf/sigmod/YangXWBZL19} due to its nice structure properties and solid theoretical foundation. %load balancing \cite{DBLP:books/siam/06/DevineBK06},

Given an undirected graph $G(V,E)$, the \textit{conductance} of the cluster $C$ is defined as $\phi(C)=\frac{|E(C,\bar{C})|}{\min\{vol(C), 2m-vol(C)\}}$, in which $|E(C,\bar{C})|$ is the number of edges with one endpoint in $C$ and another not in $C$, $vol(C)$ is the sum of the degree of all vertices in $C$, and $m$ is the number of the edges in $G$. Thus, a smaller $\phi(C)$ implies that  the number of edges going out of $C$ is relatively small compared with the number of edges within $C$. As a consequence, the smaller the $\phi(C)$, the better the clustering quality of the cluster $C$ \cite{DBLP:conf/www/LeskovecLM10, DBLP:journals/kais/YangL15}. Therefore,  \textit{conductance}-based graph clustering aims at identifying a vertex set $C$ with the smallest $\phi(C)$. However, identifying the smallest conductance $\phi^* \in (0,1]$ raises significant challenges due to its NP-hardness \cite{DBLP:journals/cc/ChawlaKKRS06}. On the positive side, many approximate or heuristic algorithms have been proposed to either reduce the conductance of the returned cluster or improve the efficiency. For example, classic \textit{Fiedler} vector-based spectral clustering algorithm outputs a cluster with conductance $O(\sqrt{\phi^*})$ \cite{alon1985lambda1}. Unfortunately, such an algorithm has to compute the eigenvector corresponding to the second smallest eigenvalue of normalized Laplacian matrix of $G$, resulting in prohibitively high time and space complexity.  To boost the efficiency, numerous diffusion-based local clustering algorithms have been proposed, whose running time depends
only on the size of the resulting cluster and is independent of or depends at most polylogarithmically on the size of the entire graph \cite{DBLP:conf/stoc/SpielmanT04, DBLP:conf/focs/AndersenCL06, DBLP:conf/kdd/KlosterG14}. However all these diffusion-based local clustering algorithms are heuristic  and the clustering quality of their  output is heavily dependent on many hard-to-tune parameters, resulting in that their clustering performance are unstable and in most cases very poor \cite{DBLP:conf/icml/ZhuLM13, DBLP:conf/icml/Fountoulakis0Y20}.  Thus, it is very  challenging to improve both  the computational efficiency and the clustering quality.

To this end, we propose a powerful \textit{peeling}-based computing framework \emph{PCon}, which can efficiently and effectively identify conductance-based clusters. In particular, we observe that \textit{Fiedler} vector-based spectral clustering algorithms and diffusion-based local clustering algorithms are  essentially a \textit{peeling}-based computing paradigm. Namely, they first define a score function for each vertex, then iteratively remove the vertex with the smallest score.  Finally, they output the results with the smallest conductance during the peeling process. Thus, the primary challenge is how to design a proper score function that is easier to compute and can obtain better clustering quality. With the help of the concept of degeneracy ordering \cite{DBLP:journals/corr/cs-DS-0310049}, we propose a heuristic algorithm \emph{PCon\_core} with linear time and space complexity. Specifically, \emph{PCon\_core} assigns the score  of each vertex as the vertex's degeneracy ordering. To further improve the clustering quality, we devise another effective algorithm \emph{PCon\_de}, which assigns the score of each vertex based on the \textit{degree ratio} (Definition \ref{def:dr}). \textit{Degree ratio} is a novel concept, which is the ratio of the degree of the vertex in the current subgraph to the degree of the vertex in the original graph. Surprisingly, we prove that \emph{PCon\_de} has a near-constant approximation ratio, which achieves an important theoretical improvement over the well-known quadratic chegger bound of \textit{Fiedler} vector-based spectral clustering. In a nutshell, our main contributions are listed as follows.
\begin{itemize}
\item \textbf{Novel Computing Framework:} We introduce \textit{PCon}, a
novel \textit{peeling}-based graph clustering framework, which can embrace most state-of-the-art algorithms.

\item \textbf{Novel algorithms:} Based on \textit{PCon}, we develop two  novel and efficient algorithms with linear time and space complexity. One is a heuristic algorithm which aims to optimize efficiency, while the other is an approximation algorithm which can obtain provable clustering qualities.

\item \textbf{Extensive Experiments:} We conduct extensive experiments on eleven datasets with six competitors to test the scalability and effectiveness of the proposed algorithms. The experimental results show that our proposal outperforms diffusion-based local clustering by a large margin in terms of clustering accuracy, and achieves 5$\sim$42 times speedup  with a high clustering  accuracy while using 1.4$\sim$7.8 times less memory than \textit{Fiedler} vector-based spectral clustering.  
\end{itemize}

\section{Problem Formulation} \label{sec:pro}

We use $G(V,E)$ to denote  an undirected graph, in which $V$ (resp. $E$) indicates the vertex set (resp. the edge set) of $G$.   Let $|V|=n$ and $|E|=m$ be the number of vertices and the number of edges, respectively. For simplicity, in this paper, we focus on un-weighted graphs\footnote{Because our results can be easily extended to the weighted case in a straightforward manner.}. Let $G_{S}=(S,E_S)$ be the subgraph induced by  $S$ if $S \subseteq V$ and $E_S= \{(u,v)\in E| u,v \in S\}$. We use $N_{S}(v)=\{u\in S|(u,v)\in E\}$ to denote the neighbors of $v$  in $S$. Let $d_S(v)=|N_S(v)|$ be the degree of vertex $v$ in $S$. When the context is clear, we use $N(v)$ and $d(v)$ to represent $N_V(v)$ and $d_V(v)$, respectively. For a vertex subset $S\subseteq V$,  we define $\bar{S}=V\setminus S$ as the complement of $S$. We use $E(S, \bar{S})=\{(u,v) \in E|u\in S, v\in \bar{S}\}$ to represent the edges with one endpoint in $S$ and another not in $S$. Let $vol(S)=\sum_{u \in S}d(u)$ be the sum of the degree of all vertices in $S$.

A cluster is a vertex set $S \subseteq V$. According to \cite{DBLP:conf/www/LeskovecLM10, DBLP:journals/kais/YangL15}, we know that a cluster $S$ is good if the cluster is densely connected internally and well separated from the remainder of $G$. Therefore, we use a representative metric \emph{conductance} \cite{fiedler1973algebraic, DBLP:conf/stoc/SpielmanT04, DBLP:conf/focs/AndersenCL06, DBLP:conf/kdd/KlosterG14} to measure the quality of a cluster $S$.

\begin{definition} \label{def:co}
Given an undirected graph $G(V,E)$ and a vertex subset $S \subsetneq V$, the conductance of $S$ is defined as	$\phi(S)=\frac{|E(S,\bar{S})|}{\min\{vol(S), 2m-vol(S)\}}$.  $\phi(V)=1$ for convenience.
\end{definition}

By Definition \ref{def:co}, we have $\phi(S)=\phi(\bar{S})$. Figure \ref{fig:intro} illustrates the concept of \textit{conductance} on a synthetic graph.

\stitle{Problem Statement.} Given an undirected graph $G(V,E)$, the goal of the \textit{conductance}-base graph clustering is to identify a vertex subset $S^* \subseteq V$, satisfying $vol(S^{*})\leq m$ and  $\phi(S^*)\leq \phi(S)$ for any $S \subseteq V$. For simplicity, we use $\phi^*$ to represent  $\phi(S^*)$.

\begin{figure}[t]
	\centering
	\includegraphics[width=0.8\columnwidth]{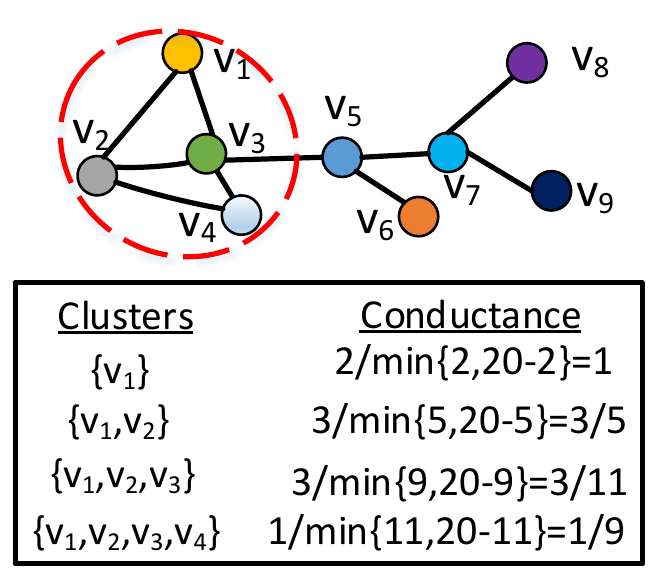} 
	\caption{Illustration of conductance for a synthetic graph with 9 vertices and 10 edges.  The conductance of several clusters is shown in the box. The smallest conductance $\phi^*=1/9$ and the corresponding cluster is $\{v_1, v_2,v_3,v_4\}$ included in the red circle.} \vspace{-0.3cm}
	\label{fig:intro} 
\end{figure}

\section{Existing Solutions} \label{sec:existing}
 Here, we review some state-of-the-art algorithms. For convenience, we  classify these algorithms into three categories: \textit{Fiedler} vector-based spectral clustering, diffusion-based local clustering, and other methods.

\begin{table*}[t!]
	\centering
	\caption{A comparison of state-of-the-art algorithms. $\phi^*$ is the smallest conductance value and $\phi^*\in (0,1]$. $\epsilon$ is the error tolerance. $\alpha$ and $t$ are model parameters of $\textit{NIBBLE\_PPR}$ and \textit{HK\_Relax}, respectively. The space complexity of SC is $O(n^2)$ and the rest is $O(m+n)$. $\times$ represents the corresponding method has no accuracy guarantee.}
	\scalebox{1}{
		\begin{tabular}{c|c|c|c}
			\toprule
			\multicolumn{1}{c|}{Methods} & \multicolumn{1}{c|}{Accuracy Guarantee}&
			\multicolumn{1}{c|}{Time Complexity}& \multicolumn{1}{c}{Remark}\\
			\midrule
			\textit{SC} \cite{fiedler1973algebraic} & $O(\sqrt{\phi^*})$ & $O(n^3)$ & Eigenvector-based \\
			\textit{ASC} \cite{trevisan2017lecture} & $O(\sqrt{4\phi^*+2\epsilon})$ & $O((m+n) \frac{1}{\epsilon} \log \frac{n}{\epsilon})$& Eigenvector-based\\
			\midrule
			\textit{NIBBLE} \cite{DBLP:conf/stoc/SpielmanT04} & $\times$ &$O(\frac{\log \frac{1}{\epsilon}}{\epsilon})$& Diffusion-based\\
			\textit{NIBBLE\_PPR} \cite{DBLP:conf/focs/AndersenCL06} & $\times$ & $O(\frac{\log \frac{1}{\alpha\epsilon}}{\alpha \epsilon})$& Diffusion-based\\
			\textit{HK\_Relax} \cite{DBLP:conf/kdd/KlosterG14} & $\times$ & $O(\frac{te^t \log(\frac{1}{\epsilon})}{\epsilon})$& Diffusion-based\\
			\midrule
			\textit{PCon\_core} (this paper)& $\times$ & $O(m+n)$& Degeneracy-based\\
						\textit{PCon\_de} (this paper) & $O(1/2+1/2\phi^*)$ & $O(m+n)$& Degree Ratio-based\\
			\bottomrule	
	\end{tabular}}
	\label{tab:alg}
\end{table*}

\subsection{\textit{Fiedler} Vector-based Spectral Clustering} \label{subsec:com_tppr}

\textit{Fiedler} vector-based spectral clustering obtains the clustering result by calculating the spectrum of the normalized Laplacian matrix. Specifically, let $A$ be the adjacency matrix of $G$, where  $A_{uv}=1$ if $(u,v) \in E$, and $A_{uv}=0$ otherwise. We use $D$ to represent the diagonal degree matrix of $G$, in which $D_{uu}=d(u)$. The Laplacian matrix of $G$ is defined as $L=D-A$. Furthermore, the normalized Laplacian matrix is defined as $\mathcal{L}=D^{-1/2}LD^{-1/2}$. The following theorem is an important theoretical basis of \textit{Fiedler} vector-based spectral clustering.

\begin{theorem} [Cheeger inequality \cite{alon1985lambda1}]
Given an undirected graph $G(V,E)$ and its normalized Laplacian matrix $\mathcal{L}$, we assume that $\lambda_2$ is the second smallest eigenvalue of $\mathcal{L}$. Then, we have $\lambda_2/2\leq \phi^*\leq \sqrt{2\lambda_2}$.
\end{theorem}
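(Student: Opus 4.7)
The plan is to prove the two inequalities separately, since the easy direction $\lambda_2/2 \leq \phi^*$ and the hard direction $\phi^* \leq \sqrt{2\lambda_2}$ require very different techniques. The unifying idea is the variational characterization
\[
\lambda_2 \;=\; \min_{\substack{g \neq 0 \\ g \perp D\mathbf{1}}} \frac{\sum_{(u,v)\in E}(g(u)-g(v))^2}{\sum_{v\in V} d(v)\,g(v)^2},
\]
obtained by substituting $f = D^{1/2}g$ into the Rayleigh quotient of $\mathcal{L}$. This Rayleigh-quotient form is the bridge between spectral quantities and combinatorial cuts.

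For the easy direction, I would exhibit a test vector built from the optimal cut $S^*$. Writing $a = vol(S^*)$, $b = vol(\bar S^*) = 2m-a$, I would set $g(v) = 1/a$ for $v\in S^*$ and $g(v) = -1/b$ for $v\in \bar S^*$. This $g$ is orthogonal to $D\mathbf{1}$ by construction. Plugging in, the numerator becomes $|E(S^*,\bar S^*)|\,(1/a+1/b)^2$ and the denominator reduces to $1/a + 1/b$. Hence the Rayleigh quotient equals $|E(S^*,\bar S^*)|\,(1/a+1/b) \leq 2|E(S^*,\bar S^*)|/\min(a,b) = 2\phi^*$, which yields $\lambda_2 \leq 2\phi^*$.

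For the hard direction I would use the classical \emph{sweep-cut rounding} of the Fiedler vector. Let $f$ be a unit eigenvector for $\lambda_2$ and set $g = D^{-1/2}f$, so that $g$ realises the minimum in the variational form above. By the orthogonality condition, after shifting by a suitable constant I may assume $g$ has a ``median-type'' split so that the positive support $g_+ := \max(g,0)$ satisfies $vol(\mathrm{supp}(g_+)) \leq m$, and I can show $\frac{\sum_{(u,v)}(g_+(u)-g_+(v))^2}{\sum_v d(v)g_+(v)^2} \leq \lambda_2$. Sorting the vertices $v_1,\dots,v_k$ of $\mathrm{supp}(g_+)$ in decreasing order of $g_+$ and considering the sweep cuts $S_i=\{v_1,\dots,v_i\}$, I would prove that the best sweep cut satisfies $\min_i \phi(S_i) \leq \sqrt{2\lambda_2}$ via the standard Cauchy--Schwarz trick: writing $\sum_{(u,v)\in E}|g_+(u)^2-g_+(v)^2|$ as a telescoping sum over the $|E(S_i,\bar S_i)|$ and then applying Cauchy--Schwarz against $\sum_{(u,v)\in E}(g_+(u)-g_+(v))^2 \cdot \sum_{(u,v)\in E}(g_+(u)+g_+(v))^2$.

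The main obstacle is the hard direction, and specifically the two delicate moves within it. First, one must justify that replacing $g$ by its positive part $g_+$ can only decrease the Rayleigh quotient, which requires either a median-shift argument or the observation that at least one of $g_+, g_-$ satisfies both the volume constraint and the Rayleigh bound. Second, the Cauchy--Schwarz step must bound $\sum_{(u,v)}(g_+(u)+g_+(v))^2$ by $2\sum_v d(v)g_+(v)^2$, which is tight only because each edge contributes to exactly two vertex degrees; getting the constant right here is what produces the factor $\sqrt{2}$ rather than a larger constant. Once these pieces are assembled, the existence of a sweep cut with $\phi(S_i)\leq\sqrt{2\lambda_2}$ immediately gives $\phi^*\leq\sqrt{2\lambda_2}$, completing the proof.
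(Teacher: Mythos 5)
The paper does not prove this theorem at all: it is quoted verbatim as the classical Cheeger inequality with a citation to Alon--Milman, so there is no in-paper argument to compare against. Your sketch is the standard and correct proof of that classical result: the test-vector computation for $\lambda_2 \le 2\phi^*$ checks out exactly as you describe (the Rayleigh quotient evaluates to $|E(S^*,\bar S^*)|(1/a+1/b) \le 2\phi^*$), and the sweep-cut argument with the telescoping sum and Cauchy--Schwarz is the textbook route to $\phi^* \le \sqrt{2\lambda_2}$. The only place where the sketch is slightly loose is the reduction to $g_+$: you need \emph{both} ingredients you mention, not one or the other --- first shift $g$ by its degree-weighted median (which cannot increase the Rayleigh quotient, since the numerator is shift-invariant and the denominator only grows by orthogonality to $D\mathbf{1}$), and then use the superadditivity of the edge sum over $g_+$ and $g_-$ together with the additivity of the denominator to conclude that at least one of the two parts has Rayleigh quotient at most $\lambda_2$ while both supports satisfy the volume constraint. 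With that step made explicit, the argument is complete and matches the proof in the cited reference.
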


At a high level, \textit{Fiedler} vector-based spectral clustering consists of three steps: (1) Compute the eigenvector $x$ of $\lambda_2$. (2) Sort all entries in $x$ such that $x_1\leq x_2 \leq ... \leq x_n$. (3) Output $S=\arg\min \phi(S_i)$, in which $S_i=\{x_1,x_2,...,x_i\}$. 
%The detailed procedure of the \textit{Fiedler} vector-based spectral clustering is shown in Algorithm \ref{alg:sc}. Note that in Lines 6-10, we output the smaller of $S$ and $\bar{S}$ due to $\phi(S)=\phi(\bar{S})$.

\begin{theorem} [\cite{fiedler1973algebraic, alon1985lambda1}]
Let  $S$ be the vertex subset returned by \textit{Fiedler} vector-based spectral clustering, we have $\phi(S)\leq \sqrt{2\lambda_2}$. Furthermore, we have $\phi^*\leq \phi(S) \leq 2\sqrt{\phi^*}$.
\end{theorem}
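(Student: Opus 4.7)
The plan is to decompose the statement into two parts. Part (1): $\phi(S)\leq\sqrt{2\lambda_2}$, which is the substantive claim. Part (2): the two-sided bound $\phi^*\leq\phi(S)\leq 2\sqrt{\phi^*}$, which follows almost for free — the left inequality is immediate from the definition of $\phi^*$ as the minimum conductance, and the right inequality is obtained by combining Part (1) with the already-stated easy direction of Cheeger's inequality, $\lambda_2\leq 2\phi^*$, giving $\phi(S)\leq\sqrt{2\cdot 2\phi^*}=2\sqrt{\phi^*}$. So the real work is Part (1).

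For Part (1), let $x$ be the unit Fiedler vector, i.e., $\mathcal{L}x=\lambda_2 x$ with $x\perp D^{1/2}\mathbf{1}$. I would first pass to $y=D^{-1/2}x$ so that $\sum_v d(v)y_v=0$ and $\lambda_2=\frac{y^{\top}Ly}{y^{\top}Dy}=\frac{\sum_{(u,v)\in E}(y_u-y_v)^2}{\sum_v d(v)y_v^2}$. Next, by shifting $y$ by an appropriate constant one can arrange that $vol(\{v:y_v>0\})\leq m$ and $vol(\{v:y_v<0\})\leq m$, and then split $y$ into its positive part $y^+$ and negative part $y^-$. A standard convexity/averaging argument shows that at least one of them, call it $z$, satisfies $R(z):=\frac{\sum_{(u,v)\in E}(z_u-z_v)^2}{\sum_v d(v)z_v^2}\leq\lambda_2$, while $z$ is supported on a set of volume at most $m$ — so the sweep cuts of $z$ are admissible in the definition of $\phi^*$.

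The heart of the proof, and the main obstacle, is the sweep-cut (Cheeger) lemma: with the support of $z$ sorted as $v_1,\ldots,v_k$ so that $z_{v_1}\geq\cdots\geq z_{v_k}>0$ and $S_i=\{v_1,\ldots,v_i\}$, one must show $\min_{i}\phi(S_i)\leq\sqrt{2R(z)}$. The classical trick I would use is the probabilistic threshold argument: draw $\tau$ from $[0,\max_v z_v^2]$ with density proportional to $d\tau$ and consider the random level set $S_\tau=\{v:z_v^2>\tau\}$. A direct computation gives $\mathbb{E}[vol(S_\tau)]=\sum_v d(v)z_v^2$, and Cauchy–Schwarz applied to the telescoping identity $|z_u^2-z_v^2|=|z_u-z_v|\cdot|z_u+z_v|$ on each edge yields $\mathbb{E}[|E(S_\tau,\overline{S_\tau})|]\leq\sqrt{2R(z)}\cdot\sum_v d(v)z_v^2$. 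Hence some level set has conductance at most $\sqrt{2R(z)}\leq\sqrt{2\lambda_2}$; since every level set of $z$ is exactly one of the sweep-cut prefixes $S_i$ and the algorithm outputs $S=\arg\min_i\phi(S_i)$, we conclude $\phi(S)\leq\sqrt{2\lambda_2}$, completing Part (1) and thereby the whole statement.
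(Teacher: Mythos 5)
The paper does not actually prove this theorem; it is stated as imported background and attributed to the classical references, so there is no in-paper argument to compare against. Your proposal reconstructs the standard proof of the hard direction of Cheeger's inequality, and it is essentially correct: the reduction of the two-sided bound to Part (1) via $\lambda_2\leq 2\phi^*$ is exactly right, the shift-and-split step (Rayleigh quotient can only decrease under a degree-weighted shift since $\sum_v d(v)y_v=0$, and the mediant inequality gives one of $y^+,y^-$ a quotient at most $R(y)$) is the standard lemma, and the randomized threshold computation with Cauchy--Schwarz on $|z_u^2-z_v^2|=|z_u-z_v|\,|z_u+z_v|$ together with $\sum_{(u,v)\in E}(z_u+z_v)^2\leq 2\sum_v d(v)z_v^2$ delivers $\min_\tau\phi(S_\tau)\leq\sqrt{2R(z)}$. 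Two small points you gloss over but should make explicit if you write this out in full: first, the level sets of $z=y^+$ are suffixes of the sorted order of the full Fiedler vector rather than prefixes, so you need $\phi(S)=\phi(\bar S)$ (which the paper notes after Definition 1) to conclude that the algorithm's prefix sweep attains the same conductance value; second, the expectation bound gives $\mathbb{E}[|E(S_\tau,\overline{S_\tau})|]\leq\sqrt{2R(z)}\cdot\mathbb{E}[vol(S_\tau)]$, from which the existence of a good $\tau$ follows by averaging over the ratio of expectations, not the expectation of the ratio --- this is the standard step but worth stating. Neither is a gap in substance.
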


\iffalse
\begin{algorithm}[tb]
	\caption{\textit{Fiedler} vector-based spectral clustering}
	\label{alg:sc}
	\textbf{Input}: An undirected graph $G(V, E)$ \\
	\textbf{Output}: A cluster $S$
	\begin{algorithmic}[1] %[1] enables line numbers
		\STATE $L \leftarrow D-A$
		\STATE $\mathcal{L} \leftarrow D^{-1/2}LD^{-1/2}$
		\STATE $x \leftarrow $ eigenvector of second smallest eigenvalue of $\mathcal{L}$ 
		\STATE $x_i \leftarrow $ to be index of $x$ with $i$th smallest value
		\STATE $S \leftarrow \arg\min \phi(S_i)$, where $S_i=\{x_1,x_2,...,x_i\}$
		\IF {$|S|>|\bar{S}|$}
		\STATE \textbf{return} $\bar{S}$
		\ELSE
		\STATE \textbf{return} $S$
		\ENDIF
	\end{algorithmic}
\end{algorithm}
\fi

Note that since \textit{Fiedler} vector-based spectral clustering (\textit{SC} for short) needs to calculate the eigenvector of $\mathcal{L}$, its time complexity is $O(n^3)$ and space complexity is $O(n^2)$ \cite{fiedler1973algebraic}, resulting in poor scalability. Thus, some authors devised efficient approximate algorithms. For example, Trevisan proposed \textit{ASC} to approximate the  eigenvector of second smallest eigenvalue of $\mathcal{L}$  \cite{trevisan2017lecture}. As a result, \textit{ASC} identifies a cluster $S$ with $\phi(S)\leq \sqrt{4\phi^*+2\epsilon}$ using $O((m+n)\cdot \frac{1}{\epsilon} \log \frac{n}{\epsilon})$ time cost, in which $\epsilon$ is the error tolerance. Although the time complexity of \textit{ASC} is much  less than of \textit{SC}, the quality of \textit{ASC} decreases rapidly as shown in our experiments.

\subsection{Diffusion-based Local Clustering} \label{subsec:gr}
Diffusion-based local clustering is another  technique that propagates information from the given query seed vertex $q$ to identify clusters. Here, we state three state-of-the-art graph diffusions: Truncated Random Walk \cite{DBLP:conf/stoc/SpielmanT04}, Personalized PageRank \cite{ DBLP:conf/focs/AndersenCL06}, and Heat Kernel \cite{DBLP:conf/kdd/KlosterG14}. For simplicity, we use $\pi_T$, $\pi_P$, and $\pi_H$ to represent the probability distribution at the end of Truncated Random Walk, Personalized PageRank, and Heat Kernel, respectively. 

Before proceeding further, we give some important notations. Let $P=D^{-1}A$ be the probability transition matrix of $G$, in which $P_{uv}=1/d(u)$ for any $v \in N(u)$. Moreover, we use $P^{k}$ to represent the $k$-hop probability  transition matrix of $G$. Namely, $P^{k}_{uv}$ is the probability that a $k$-hop ($k\geq 1$) random walk from vertex $u$ would end at vertex $v$.

1) Truncated Random Walk \cite{DBLP:conf/stoc/SpielmanT04} is a graph diffusion algorithm where propagation and truncation are alternately performed. Specifically, for any vector $s$ and error tolerance $\epsilon$, we define a truncation operator $Tr(s)$ on $u$ as follows:

\begin{equation}
Tr(s)[u]=
\begin{cases}
s[u], &if \quad s[u]\geq d(u)\epsilon\\
0, &Otherwise\\
\end{cases}
\end{equation}

Furthermore, we use $Z_0=\chi_q$ to represent the one-hot vector with only a value-1 entry corresponding to the given query seed $q$. The propagation and truncation are denoted as $Z_{i}=Tr(Z_{i-1}P)$, in which $Z_{i-1}P$ is the propagation process and  $Tr(Z_{i-1}P)$ is the truncation process which can obtain the next probability distribution. Thus, we let $\pi_T=Z_{N}$ be the probability distribution after $N$ iterations, in which $N$ is an input parameter.

2) Personalized PageRank \cite{ DBLP:conf/focs/AndersenCL06} models a special random walk process. Specifically, given a stop probability parameter $\alpha$ (a.k.a. teleportation probability), we denote a $\alpha$-discount random walk as follows: (1)
It starts from the given query seed $q$. (2) At each step it stops in the current vertex with
probability $\alpha$, or it continues to walk according to the probability transition matrix $P$ with 1-$\alpha$. Thus, $\pi_P(u)$ is the probability that the $\alpha$-discount random walk stops in $u$. %Namely, $\pi_P(u)=\sum\limits_{k=0}^{\infty}\alpha(1-\alpha)^k P^{k}_{qu}$.

3) Heat Kernel \cite{DBLP:conf/kdd/KlosterG14} also models a special random walk process. Specifically, given a heat constant $t$, $\pi_H(u)$ is the probability that a random walk of length $k$ starting from the given query seed $q$ would end at the vertex $u$, in which $k$ is sampled from the Poisson distribution $\eta(k)=\frac{e^{-t}t^k}{k!}$. Thus  $\pi_H(u) =\sum\limits_{k=0}^{\infty}\eta(k) P^{k}_{qu}$.

Similarly, these diffusion-based local clustering algorithms also consist of three steps: (1) Compute the probability distribution $\pi$ at the end of the corresponding graph diffusion, and $y=\pi D^{-1}$. (2) Sort all non-zero entries in $y$ such that $y_1\geq y_2 \geq ... \geq y_{sup(y)}$, in which $sup(y)$ is the number of the non-zero entries in $y$. (3) Output $S=\arg\min \phi(S_i)$, in which $S_i=\{y_1,y_2,...y_i\}$. 
%The detailed procedure of the local clustering is shown in Algorithm \ref{alg:local}.

Note that since diffusion-based local clustering algorithms aim to recover the cluster to which the given query seed $q$ belongs, they only have locally-biased Cheeger-like quality \cite{DBLP:conf/stoc/SpielmanT04, DBLP:conf/focs/AndersenCL06, DBLP:conf/kdd/KlosterG14}. Namely, diffusion-based local clustering algorithms do not give the theoretical gap to $\phi^*$. Besides, the clustering quality of their output is heavily dependent on the given query seed and hard-to-tune parameters, resulting in that they are unstable and prone to finding degenerate solutions in most cases \cite{DBLP:conf/icml/ZhuLM13, DBLP:conf/icml/Fountoulakis0Y20}.

\iffalse
\begin{algorithm}[tb]
	\caption{Local Clustering}
	\label{alg:local}
	\textbf{Input}: An undirected  graph $G(V, E)$, a query seed $q$, and parameter domain $\Theta$ \\
	\textbf{Output}: A cluster $S$
	\begin{algorithmic}[1] %[1] enables line numbers
		\STATE $\pi \leftarrow$ the probability distribution after the end of the corresponding graph diffusion \qquad  // $\pi$ is $\pi_T$ or $\pi_P$ or $\pi_H$
		\STATE $y \leftarrow \pi D^{-1}$  
		\STATE $y_i \leftarrow $ to be index of $y$ with $i$th largest value
		\STATE $S \leftarrow \arg\min \phi(S_i)$, where $S_i=\{y_1,y_2,...y_{sup(y)}\}$
		\IF {$|S|>|\bar{S}|$}
		\STATE \textbf{return} $\bar{S}$
		\ELSE
		\STATE \textbf{return} $S$
		\ENDIF
	\end{algorithmic}
\end{algorithm}
\fi

\subsection{Other Methods} 
Optimization-based \cite{leighton1999multicommodity, DBLP:conf/stoc/AroraRV04}  and flow-based \cite{DBLP:conf/soda/OrecchiaZ14, veldt2016simple, DBLP:conf/icml/WangFHMR17} are two important techniques for improving \textit{Fiedler} vector-based spectral clustering and diffusion-based local clustering. For example,  Leighton et al. \cite{leighton1999multicommodity} used linear programming to solve the conductance-based graph clustering with $O(\log n)$-approximation. Furthermore, Arora et al. \cite{DBLP:conf/stoc/AroraRV04} achieved the best $O(\sqrt{\log n})$-approximation using a non-trivially semi-definite programming algorithm with $O(\frac{(n+m)^{2}}{\epsilon^{O(1)}})$ time cost. However, all these  methods are mostly of theoretical interests only, as they are difficult to implement and offer rather poor practical efficiency \cite{DBLP:conf/sigmod/YangXWBZL19}.

\section{The Proposed Algorithms} \label{sec:our}
Here, we first present a three-stage computing framework, which embraces the computing paradigm of most existing approaches. Then, we develop two scalable and effective algorithms to solve the \textit{conductance}-based graph clustering.

\subsection{Computing Framework} 
Inspired by most existing approaches \cite{fiedler1973algebraic,DBLP:conf/stoc/SpielmanT04, DBLP:conf/focs/AndersenCL06, DBLP:conf/kdd/KlosterG14}, we  propose a three-stage computing framework \textit{PCon}. In \textit{Stage 1}, we  give a pre-defined  score function for every vertex according to the corresponding applications. For simplicity, we use $s(u)$ to represent the score of vertex $u$. In \textit{Stage 2}, we iteratively remove the vertex with the smallest score. Such an iterative deletion process is referred to as a peeling process.   In \textit{Stage 3}, we output the result with smallest conductance during the \emph{peeling} process.  

Obviously, \textit{Stage 1} is key to our proposed  computing framework. Different algorithms have different score functions. For example, $s(u)=x[u]$ for \textit{Fiedler} vector-based spectral clustering, in which $x$ is the eigenvector of second smallest eigenvalue of $\mathcal{L}$. In diffusion-based local clustering, we can know that $s(u)=y[u]$, in which $y=\pi D^{-1}$ and $\pi$ is the probability distribution of the diffusion process. Thus, most state-of-the-art algorithms can be reduced to the proposed three-stage computing framework. However, the primary problem with these existing algorithms to partition the graph is that it is difficult to efficiently and effectively obtain the score function. To this end, in the next sections, we devise two new score functions, which are easier  to compute.

\begin{algorithm}[tb]
	\caption{\textit{PCon\_core}}
	\label{alg:core}
	\textbf{Input}: An undirected  graph $G(V, E)$\\
	\textbf{Output}: A cluster $S$
	\begin{algorithmic}[1] %[1] enables line numbers
		\STATE $\{u_1,u_2,...,u_n\} \leftarrow$ the degeneracy ordering 
		\STATE $i\leftarrow n$, $S_i \leftarrow V$, $S \leftarrow V$
		\WHILE {$i \neq 0$}
		\IF {$vol(S_i)\leq m$ and $\phi(S_i)<\phi(S)$}
		\STATE $S \leftarrow S_i$
		\ENDIF
		\STATE  $S_{i-1} \leftarrow S_{i} \setminus \{u_{i}\}$, $i\leftarrow i-1$
		\ENDWHILE
		\STATE \textbf{return} $S$
	\end{algorithmic}
\end{algorithm}

\subsection{The \textit{PCon\_core} Algorithm}
Recall that in \textit{Stage 2}, we need to iteratively remove the vertex with the smallest score. Namely, we have to create a linear ordering on vertices.  Many ordering strategies have been proposed for numerous graph analysis tasks, such as graph coloring \cite{DBLP:conf/spaa/HasenplaughKSL14} and  $k$-clique listing \cite{DBLP:journals/pvldb/LiGQWYY20}. However, ordering techniques for conductance-based graph clustering are less explored. To fill this gap, we propose a \textit{simple} algorithm with the help of the well-known degeneracy ordering \cite{DBLP:journals/corr/cs-DS-0310049}.

\begin{definition} [Degeneracy ordering] \label{def:do}
	Given an undirected  graph $G$, a permutation $(u_1,u_2,...,u_n)$ on all vertices of $G$ is a degeneracy ordering iff every vertex $u_i$ has the minimum degree in the subgraph of $G$ induced by $\{u_i,u_{i+1},...,u_n\}$, that is, $u_i=\arg\min \{d_{V_i}(u)|u\in V_i\}$ where $V_i=\{u_i,u_{i+1},...,u_n\}$.
\end{definition}

We use degeneracy ordering to assign the score $s(u)$ of vertex $u$. Specifically, $u$ is the $s(u)$-th element in the degeneracy ordering from left to right.

\begin{figure}[t]
	\centering
	\includegraphics[width=0.9\columnwidth]{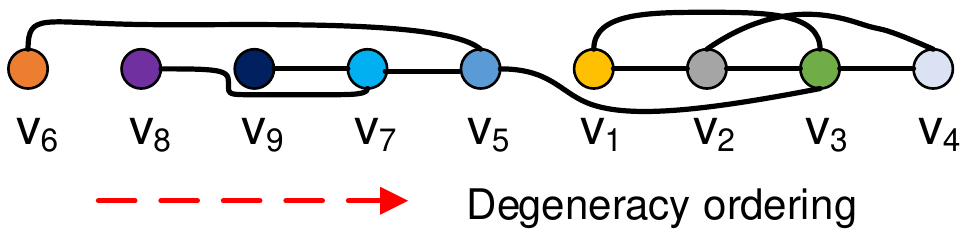} 
	\caption{A degeneracy ordering of vertices in Figure \ref{fig:intro}.}
	\label{fig:example}
	\vspace{-0.3cm}
\end{figure} 

\begin{example}
Reconsider the graph in Figure \ref{fig:intro}. According to Definition \ref{def:do}, we can derive the degeneracy ordering as $\{v_6,v_8,v_9,v_7,v_5,v_1,v_2,v_3,v_4\}$ which is illustrated  in	Figure \ref{fig:example}. Consider a vertex $v_1$, we can know that $s(v_1)=6$. This is because that $v_1$ has the minimum degree in the subgraph of $G$ induced by $\{v_1,v_2,v_3,v_4\}$. Similarity, we have $s(v_6)=1$ and $s(v_7)=4$.
\end{example}

The degeneracy ordering can be efficiently computed within linear time by the classic core-decomposition \cite{DBLP:journals/corr/cs-DS-0310049}. Specifically, it iteratively removes the vertex with the smallest degree in the current subgraph until all vertices are removed. When a vertex is removed, the degree of other vertices is updated accordingly. As a consequence, the sequence of the removed vertices forms  the degeneracy ordering. Using a proper data structure (e.g., bin-sort), we can implement the above vertex-removal process in linear time  \cite{DBLP:journals/corr/cs-DS-0310049}.  Based on the degeneracy ordering, we propose a \textit{simple but practically effective} algorithm \textit{PCon\_core}, which is outlined in Algorithm \ref{alg:core}. Specifically, we first set the score  $s(u)$ for each vertex $u\in V$ according to the degeneracy ordering (Line 1). Then, we execute \textit{Stage 2} (Lines 3, 7, and 8) and \textit{Stage 3} (Lines 4-6 and 9).

\subsection{The \textit{PCon\_de} Algorithm}
As stated in the Introduction, \textit{Fiedler} vector-based spectral clustering can obtain a cluster with conductance of $O(\sqrt{\phi^*})$. However, \textit{Fiedler} vector-based spectral clustering has prohibitively high time and space complexity. On the other hand, although diffusion-based local clustering has a very low time complexity, it is heuristic without a global conductance guarantee. Thus, an interesting problem is to devise an algorithm that has  better time complexity than \textit{Fiedler} vector-based spectral clustering and better conductance quality than diffusion-based local clustering.  To this end, we propose a novel algorithm \textit{PCon\_de} with linear time and space complexity, which has a near-constant approximation ratio.

\begin{lemma} \label{lem:1}
Given an undirected graph $G(V,E)$ and a vertex subset $S$, we have $\phi(S)=1-\frac{\sum\limits_{u\in S}d_{S}(u)}{\sum\limits_{u\in S}d_V(u)}$ if $vol(S)\leq m$.
\end{lemma}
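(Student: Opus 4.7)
The plan is to give a direct algebraic manipulation starting from the definition of conductance. Since the hypothesis $vol(S)\leq m$ implies $\min\{vol(S),\,2m-vol(S)\}=vol(S)$, the conductance reduces to $\phi(S)=|E(S,\bar S)|/vol(S)$, so the task is to rewrite $|E(S,\bar S)|$ in terms of the degrees $d_S(u)$ and $d_V(u)$ for $u\in S$.

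First I would count, for each $u\in S$, its incident edges in the whole graph by splitting them according to whether the other endpoint lies in $S$ or in $\bar S$, giving $d_V(u)=d_S(u)+|N(u)\cap\bar S|$. Summing over $u\in S$ produces
\[
\sum_{u\in S} d_V(u) \;=\; \sum_{u\in S} d_S(u) \;+\; \sum_{u\in S} |N(u)\cap\bar S|,
\]
where the last sum equals $|E(S,\bar S)|$ because every cut edge $(u,v)$ with $u\in S$, $v\in\bar S$ is counted exactly once from its $S$-endpoint. Rearranging yields $|E(S,\bar S)|=\sum_{u\in S} d_V(u)-\sum_{u\in S} d_S(u)$.

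Substituting this into $\phi(S)=|E(S,\bar S)|/vol(S)=|E(S,\bar S)|/\sum_{u\in S} d_V(u)$ and simplifying gives the claimed identity. There is no real obstacle here; the only subtle point worth stating explicitly is why the $\min$ collapses to $vol(S)$ under the hypothesis, so I would mention that at the outset to justify the denominator.
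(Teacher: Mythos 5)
Your proof is correct and takes essentially the same route as the paper: both reduce the $\min$ to $vol(S)$ under the hypothesis, use the pointwise decomposition $d_V(u)=d_S(u)+d_{\bar S}(u)$ summed over $S$ to express $|E(S,\bar S)|$ as $\sum_{u\in S}d_V(u)-\sum_{u\in S}d_S(u)$, and substitute. The only difference is cosmetic: the paper adds and subtracts $\sum_{u\in S}d_V(u)$ in the numerator, whereas you rearrange the degree-sum identity directly.
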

\begin{proof}
According to Definition \ref{def:co}, if $vol(S)\leq m$, we have $\phi(S)=\frac{|E(S,\bar{S})|}{\min\{vol(S), 2m-vol(S)\}}=\frac{|E(S,\bar{S})|}{vol(S)}$. Furthermore, $\frac{|E(S,\bar{S})|}{vol(S)}=\frac{\sum\limits_{u\in S}d_{\bar{S}}(u)}{\sum\limits_{u\in S}d_{V}(u)}=\frac{\sum\limits_{u\in S}d_{\bar{S}}(u)+\sum\limits_{u\in S}d_{V}(u)-\sum\limits_{u\in S}d_{V}(u)}{\sum\limits_{u\in S}d_{V}(u)}=\frac{\sum\limits_{u\in S}d_{V}(u)-\sum\limits_{u\in S}d_{S}(u)}{\sum\limits_{u\in S}d_{V}(u)}$. Thus, $\phi(S)=1-\frac{\sum\limits_{u\in S}d_{S}(u)}{\sum\limits_{u\in S}d_V(u)}$.
\end{proof}

For simplicity, we denote a function $g(S)=\frac{\sum\limits_{u\in S}d_{S}(u)}{2\sum\limits_{u\in S}d_V(u)}$ and assume that the larger the value of $g(S)$, the better the quality of $S$. Generally, we let $\widetilde{S}$ be the optimal vertex set for $g(.)$. That is, $g(\widetilde{S})\geq g(S)$ for any vertex subset $S\subseteq V$.

\begin{lemma} \label{lem:2}
Given an undirected graph $G(V,E)$, we have  $\frac{d_{\widetilde{S}}(u)}{d_V(u)}\geq g(\widetilde{S})$ for any $u \in \widetilde{S}$ holds.
\end{lemma}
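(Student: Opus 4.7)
The plan is to argue by contradiction: I would assume there exists some $u^\star \in \widetilde{S}$ violating the inequality, i.e. $\frac{d_{\widetilde{S}}(u^\star)}{d_V(u^\star)} < g(\widetilde{S})$, and then produce a strictly better set $S' = \widetilde{S}\setminus\{u^\star\}$ for the objective $g(\cdot)$, contradicting the optimality of $\widetilde{S}$.

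First, I would set up convenient notation by writing $A = \sum_{u\in \widetilde{S}} d_{\widetilde{S}}(u)$ and $B = \sum_{u\in \widetilde{S}} d_V(u)$, so that $g(\widetilde{S}) = A/(2B)$. The key bookkeeping step is computing the numerator and denominator of $g(S')$. Removing $u^\star$ shrinks the denominator from $B$ to $B - d_V(u^\star)$, and it reduces the numerator by $2 d_{\widetilde{S}}(u^\star)$, since we lose the $d_{\widetilde{S}}(u^\star)$ contributed by $u^\star$ itself and we also lose one from each of its $d_{\widetilde{S}}(u^\star)$ neighbors in $\widetilde{S}$ whose induced degree drops by one. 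Thus
\[
g(S') \;=\; \frac{A - 2d_{\widetilde{S}}(u^\star)}{2\bigl(B - d_V(u^\star)\bigr)}.
\]

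Next I would show $g(S') > g(\widetilde{S})$ by an elementary cross-multiplication. After cancelling the common factor of $2$ and clearing denominators (valid because $B > d_V(u^\star)$ whenever $|\widetilde{S}| \geq 2$), the inequality $g(S') > g(\widetilde{S})$ simplifies to $A\,d_V(u^\star) > 2B\,d_{\widetilde{S}}(u^\star)$, which is exactly the assumption $\frac{d_{\widetilde{S}}(u^\star)}{d_V(u^\star)} < \frac{A}{2B} = g(\widetilde{S})$. This contradicts the optimality of $\widetilde{S}$ and proves the claim.

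Finally, I would dispose of the trivial edge case $|\widetilde{S}| = 1$, where $d_{\widetilde{S}}(u) = 0$ for the unique vertex so that $g(\widetilde{S}) = 0$ and the inequality holds vacuously. The main subtlety I expect is purely combinatorial: making sure the degree accounting after deleting $u^\star$ is correct, since both $u^\star$'s own induced degree and the induced degrees of its neighbors change. Once that bookkeeping is pinned down, the rest is a one-line algebraic manipulation.
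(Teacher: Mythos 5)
Your proposal is correct and follows essentially the same route as the paper's proof: argue by contradiction, remove the violating vertex $u^\star$, account for the drop of $2d_{\widetilde{S}}(u^\star)$ in the numerator and $d_V(u^\star)$ in the denominator, and conclude $g(\widetilde{S}\setminus\{u^\star\})>g(\widetilde{S})$. The only (immaterial) difference is that you verify the strict increase by cross-multiplication while the paper substitutes the assumed bound $d_{\widetilde{S}}(u^\star)<g(\widetilde{S})\,d_V(u^\star)$ directly into the numerator; your explicit treatment of the degree bookkeeping and the $|\widetilde{S}|=1$ edge case is a slight plus.
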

\begin{proof}
  This lemma can be proved by contradiction. Assume that there is a vertex $u \in \widetilde{S}$ such that $\frac{d_{\widetilde{S}}(u)}{d_V(u)}< g(\widetilde{S})$, we have $d_{\widetilde{S}}(u)< g(\widetilde{S}) d_V(u)$. Thus, $g(\widetilde{S} \setminus \{u\})=\frac{1/2\sum\limits_{v\in \widetilde{S}}d_{\widetilde{S}}(v)-d_{\widetilde{S}}(u)}{\sum\limits_{v\in \widetilde{S}}d_V(v)-d_V(u)}>\frac{1/2\sum\limits_{v\in \widetilde{S}}d_{\widetilde{S}}(v)-g(\widetilde{S}) d_V(u)}{\sum\limits_{v\in \widetilde{S}}d_V(v)-d_V(u)}=\frac{g(\widetilde{S})\sum\limits_{v \in \widetilde{S}}d_V(v)-g(\widetilde{S}) d_V(u)}{\sum\limits_{v\in \widetilde{S}}d_V(v)-d_V(u)}=g(\widetilde{S})$, which contradicts that $\widetilde{S}$ is the optimal vertex set for $g(.)$. As a result, $\frac{d_{\widetilde{S}}(u)}{d_V(u)}\geq g(\widetilde{S})$ for any $u \in \widetilde{S}$ holds.
\end{proof}

\begin{definition} [Degree ratio] \label{def:dr}
Given an undirected graph $G(V,E)$ and a subgraph $G_S$, the degree ratio of $u$ $\in S$ w.r.t. $G$ and $G_S$ is defined as $Dr_S(u)=\frac{d_S(u)}{d_V(u)}$.
\end{definition}

\begin{algorithm}[t]
	\caption{\textit{PCon\_de}} \label{algor:de}
	\textbf{Input:}
	An undirected graph $G(V,E)$\\
	\textbf{Output:}
	A cluster $\hat{S}$
	\begin{algorithmic}[1]
		\STATE $S\leftarrow V$;	$\hat{S}\leftarrow V$
		\STATE 	$Dr_S(u)\leftarrow \frac{d_S(u)}{d_V(u)}$ for each vertex $u \in S$
		
		\WHILE{$S\neq \emptyset$}
		\STATE $u \leftarrow \arg \min\{Dr_S(u)|u\in S\}$
		\STATE $S\leftarrow S \setminus \{u\}$
		\IF{$g(S)> g(\hat{S})$ and $vol(S)\leq m$}
		\STATE $\hat{S} \leftarrow S$
		\ENDIF
		\FOR {$v \in N_S(u)$}
		\STATE $Dr_S(v) \leftarrow Dr_S(v)- \frac{1}{d_V(v)}$
		\ENDFOR
		\ENDWHILE
		\STATE \textbf{return}  $\hat{S}$
	\end{algorithmic}
\end{algorithm}

Based on the above lemmas and definitions, we  devise a linear time greedy removing  algorithm called \textit{PCon\_de}, which is shown in Algorithm \ref{algor:de}. In particular,  Algorithm \ref{algor:de} first initializes  the current search space $S$ as $V$, candidate result $\hat{S}$ as $V$, and the \textit{degree ratio} $Dr_S(u)$ for every vertex $u\in S$ according to Definition \ref{def:dr} (Lines 1-2). Subsequently, it executes the greedy removing process in each round to improve the quality of the target cluster (Lines 3-12). Specifically, in each round, it obtains one vertex $u$ with the smallest \textit{degree ratio} (Line 4). Lines 5-11 update the current search space $S$, the candidate result $\hat{S}$ if any, and $Dr_S(v)$ for $v\in N_S(u)$. The iteration terminates once the current search space is empty (Line 3). Finally, it returns  $\hat{S}$ as the approximate solution (Line 13).

\begin{theorem}
Algorithm \ref{algor:de} can identify a cluster with conductance $1/2+1/2\phi^*$. 
\end{theorem}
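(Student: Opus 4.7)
The plan is to adapt the classical greedy densest-subgraph analysis. Let $S^*$ denote a conductance optimum, so $\phi(S^*)=\phi^*$ and $vol(S^*)\le m$. By Lemma \ref{lem:1}, $g(S^*)=(1-\phi^*)/2$, and since $S^*$ is $g$-maximal over sets of volume at most $m$, the exchange argument proving Lemma \ref{lem:2} applies verbatim to $S^*$ and yields, for every $u\in S^*$,
\[
Dr_{S^*}(u)\;=\;\frac{d_{S^*}(u)}{d_V(u)}\;\ge\;g(S^*)\;=\;\frac{1-\phi^*}{2}.
\]

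Now I would consider the first iteration $i$ at which Algorithm \ref{algor:de} peels a vertex of $S^*$, say $u^*$. By construction $S^*\subseteq S_{i-1}$, so $d_{S_{i-1}}(u^*)\ge d_{S^*}(u^*)$, and therefore $Dr_{S_{i-1}}(u^*)\ge(1-\phi^*)/2$. The greedy rule selects $u^*$ as a minimizer of $Dr_{S_{i-1}}(\cdot)$, so \emph{every} $v\in S_{i-1}$ must also satisfy $Dr_{S_{i-1}}(v)\ge(1-\phi^*)/2$. Summing $d_{S_{i-1}}(v)\ge \tfrac{1-\phi^*}{2}\,d_V(v)$ over $v\in S_{i-1}$ and dividing by $2\,vol(S_{i-1})$ gives
\[
g(S_{i-1})\;\ge\;\frac{1-\phi^*}{4},
\]
and then Lemma \ref{lem:1} turns this into $\phi(S_{i-1})\le 1-\tfrac{1-\phi^*}{2}=\tfrac{1}{2}+\tfrac{\phi^*}{2}$. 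Since Algorithm \ref{algor:de} tracks the set with the largest $g$ value seen during the peel (restricted to candidates with $vol\le m$), we conclude $g(\hat{S})\ge g(S_{i-1})$, delivering the advertised bound on $\phi(\hat{S})$.

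The principal obstacle is the volume side-condition: both Lemma \ref{lem:1} and the algorithm's update rule require $vol(\cdot)\le m$, but the inclusion $S^*\subseteq S_{i-1}$ only guarantees $vol(S_{i-1})\ge vol(S^*)$, not $vol(S_{i-1})\le m$. To handle the case $vol(S_{i-1})>m$, I would continue the peeling past iteration $i$ and argue inductively that the pointwise invariant $Dr_{S_k}(v)\ge (1-\phi^*)/2$ is preserved on the surviving portion of $S^*$ (each removal subtracts at most $1/d_V(v)$ from one neighbor's ratio, and the greedy rule keeps eliminating the current minimizer), so that by the first $k$ for which $vol(S_k)\le m$ the same summation still forces $g(S_k)\ge (1-\phi^*)/4$. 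A cleaner alternative is to exploit $\phi(S)=\phi(\bar{S})$: whenever $vol(S_{i-1})>m$ the complement has volume at most $m$ and the same conductance, so the analysis can be transferred to the valid window. Making either of these reductions quantitatively watertight is the only delicate step of the proof.
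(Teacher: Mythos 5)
Your argument is, at its core, the same as the paper's: identify the first round at which the greedy peel removes a vertex $u$ of an optimal set, use the exchange bound of Lemma \ref{lem:2} to lower-bound the degree ratio of $u$ at that moment, and propagate the pointwise bound through the weighted average to get $g(S_{i-1})\ge \tfrac{1-\phi^*}{4}$. The one substantive difference is that you run the exchange argument directly on the conductance optimum $S^*$ rather than on the paper's auxiliary unconstrained $g$-maximizer $\widetilde{S}$. This is legitimate: deleting a vertex can only decrease volume, so $S^*$ is indeed $g$-maximal within the family of sets of volume at most $m$ by Lemma \ref{lem:1}, and the contradiction in Lemma \ref{lem:2} goes through unchanged. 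It also lets you skip the paper's final comparison $g(\widetilde{S})\ge g(S^*)$, so your route is marginally cleaner.

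The ``delicate step'' you flag --- that $vol(S_{i-1})$ may exceed $m$, in which case the algorithm never records $S_{i-1}$ as a candidate and Lemma \ref{lem:1} does not convert $g(S_{i-1})$ into a conductance bound --- is a genuine issue, and you should know that the paper's own proof does not address it either: it simply asserts $g(\hat{S})\ge g(V_t)$, even though Line 6 of Algorithm \ref{algor:de} only updates $\hat{S}$ when $vol(S)\le m$. (The concern is not hypothetical: at the start of the peel every degree ratio equals $1$, so a vertex of the optimal set can be removed in the very first round, making $V_t=V$ with $vol(V)=2m$.) Neither of your proposed repairs is immediate as sketched: the complement trick stalls because the algorithm never evaluates the complements of the peeled prefixes and $g(S)\ne g(\bar{S})$ in general, while the ``continue peeling'' invariant is not obviously preserved once vertices of $S^*$ begin to be deleted, since the comparison $d_{S_k}(v)\ge d_{S^*}(v)$ then fails for the survivors. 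In short, your write-up reproduces the published argument and is, if anything, more candid about exactly where that argument is incomplete.
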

\begin{proof}
	Let $\widetilde{S}$ is the optimal vertex set for $g(.)$. In Lines 3-12, Algorithm \ref{algor:de} executes the  peeling process. That is, in each round, it greedily deletes the vertex with the smallest \textit{degree ratio}. Consider the round $t$ when the first vertex $u$ of $\widetilde{S}$ is deleted. Let $V_t$ be the vertex set from the beginning of round $t$. Clearly, $\widetilde{S}$ is the subset of $V_t$ because $u$ is the first deleted vertex of $\widetilde{S}$. This implies that
	$\min\limits_{v \in V_t}Dr_{V_t}(v)=Dr_{V_t}(u)=\frac{d_{V_t}(u)}{d_V(u)}\geq \frac{d_{\widetilde{S}}(u)}{d_V(u)}$. Furthermore, $g(V_t)=\frac{\sum\limits_{v\in V_t}d_{V_t}(v)}{2\sum\limits_{v\in V_t}d_V(v)}=\frac{\sum\limits_{v\in V_t}d_V(v)\frac{d_{V_t}(v)}{d_V(v)}}{2\sum\limits_{v\in V_t}d_V(v)}\geq \frac{\sum\limits_{v\in V_t}d_V(v)\frac{d_{V_t}(u)}{d_V(u)}}{2\sum\limits_{v\in V_t}d_V(v)}= \frac{d_{V_t}(u)}{2d_V(u)}\geq \frac{d_{\widetilde{S}}(u)}{2d_V(u)}$. By Lemma \ref{lem:2}, we have $\frac{d_{\widetilde{S}}(u)}{d_V(u)}\geq g(\widetilde{S})$. Thus, $g(V_t)\geq \frac{g(\widetilde{S})}{2}$. Since  Algorithm \ref{algor:de} maintains the optimal solution during the peeling  process in Lines 6-8,  $\hat{S}$ will be returned in Line 13 and $g(\hat{S})\geq g(V_t)\geq  \frac{g(\widetilde{S})}{2}$. On the other hand, by Lemma \ref{lem:1}, we have $g(\hat{S})=\frac{1-\phi(\hat{S})}{2}$ and $g(S^*)=\frac{1-\phi(S^*)}{2}$. According to the definition of $\widetilde{S}$, we know that $g(\widetilde{S})\geq g(S^*)$. Thus, $\frac{1-\phi(\hat{S})}{2}=g(\hat{S})\geq \frac{g(\widetilde{S})}{2}\geq \frac{g(S^*)}{2}=\frac{1-\phi(S^*)}{4}$. Namely, $\phi(\hat{S})\leq 1-\frac{1-\phi(S^*)}{2}=1/2+1/2\phi^*$. As a result, Algorithm \ref{algor:de} can identify a cluster with conductance  $1/2+1/2\phi^*$.
\end{proof}

\section{Empirical Results}\label{sec:experiments}
\subsection{Experimental Setup}
We evaluate our  proposed solutions on six real-life publicly-available datasets\footnote{All datasets can be downloaded from  http://snap.stanford.edu/} (Table \ref{tab:data}), which are widely used benchmarks for conductance-based graph clustering \cite{DBLP:journals/pvldb/ShunRFM16, DBLP:conf/sigmod/YangXWBZL19}. The maximum connected components of these datasets are used in the experiments. We also use five synthetic graphs \textit{LFR} \cite{lancichinetti2009detecting}, \textit{WS} \cite{watts1998collective}, \textit{PLC} \cite{holme2002growing}, \textit{ER} \cite{erdos1960evolution}, and \textit{BA} \cite{barabasi1999emergence}.  The following six competitors are implemented for comparison.

\begin{table}[t!]
	\centering
	\caption{Dataset statistics. $\bar{d}$ is the average degree. } 
	\scalebox{1}{
		\begin{tabular}{c|ccccc}
			\toprule
			Dataset & $|V|$  & $|E|$  &  $\bar{d}$ \\
			\midrule
			DBLP & 317,080  & 1,049,866 &6.62\\
			Youtube & 1,134,890  & 2,987,624 &5.27 \\
			Pokec & 1,632,803  & 22,301,964 &2.73\\
			LJ & 4,843,953  & 42,845,684 &17.69 \\   
			Orkut & 3,072,441  & 117,185,083 &76.28\\
			Twitter & 41,652,231  & 1,202,513,046 &57.74&\\
			\bottomrule		
	\end{tabular}}
	\label{tab:data}
\end{table}

\begin{itemize}
\item  Eigenvector-based methods: \textit{SC} \cite{fiedler1973algebraic} and \textit{ASC} \cite{trevisan2017lecture}.  \textit{SC} (resp. \textit{ASC}) used  the exact eigenvector (resp. approximate eigenvector)  of the second smallest eigenvalue of $\mathcal{L}$ to execute the \textit{peeling} process. We use sparse matrices to  highly optimize these algorithms. 

\item Diffusion-based methods: 
\textit{NIBBLE\_PPR} \cite{DBLP:conf/focs/AndersenCL06} and
\textit{HK\_Relax} \cite{DBLP:conf/kdd/KlosterG14}. Since \textit{NIBBLE\_PPR} and \textit{HK\_Relax} took a seed vertex as input, to be more reliable,  we randomly select 50 vertices as seed vertices and report the average runtime and quality.  Unless specified otherwise, following previous work \cite{DBLP:journals/pvldb/ShunRFM16}, we set $\alpha=0.01$
and $\epsilon=\frac{1}{m}$ for  \textit{NIBBLE\_PPR}; $t=10$ and $\epsilon=\frac{1}{m}$ for  \textit{HK\_Relax}. Note that we do not include \textit{NIBBLE} of Table \ref{tab:alg} in the
experiments because \textit{NIBBLE} is outperformed by \textit{NIBBLE\_PPR} and \textit{HK\_Relax}  \cite{DBLP:journals/pvldb/ShunRFM16}. 

\item Flow-based methods: \textit{SimpleLocal} \cite{veldt2016simple} and \textit{CRD} \cite{DBLP:conf/icml/WangFHMR17}. These methods devise specialized max-flow algorithms with early termination. We take their default parameters in our experiments.
\end{itemize}

\begin{figure}[t!]
	\centering
	\includegraphics[width=0.45\textwidth]{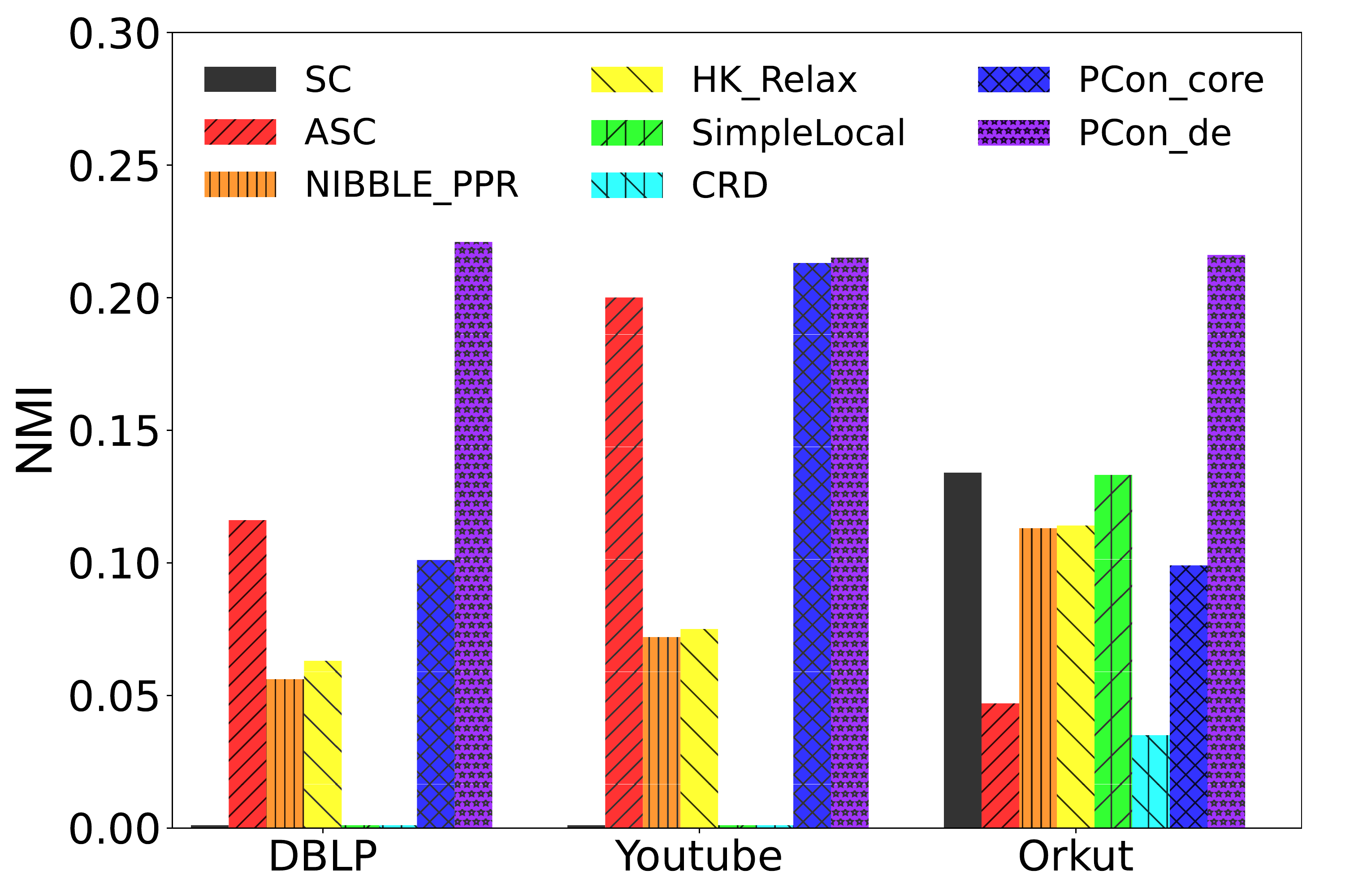}
	\vspace{-0.3cm}
	\caption{\textit{NMI} scores on real-world graphs with ground-truth clusters.}
	\label{fig:gr} \vspace{-0.3cm}
\end{figure}

\begin{figure}[t!]
	\centering
	\includegraphics[width=0.45\textwidth]{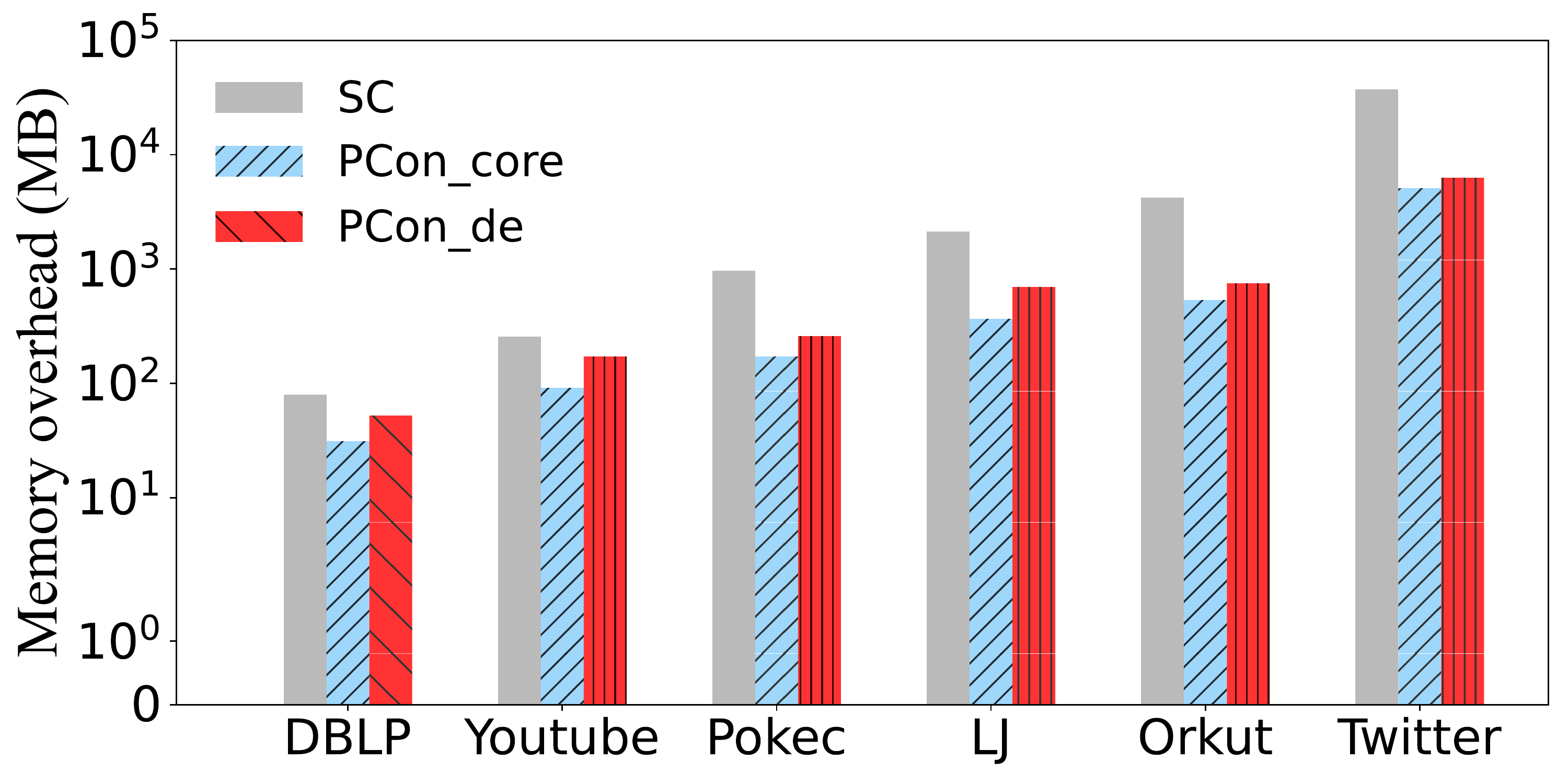}
	\vspace{-0.3cm}
	\caption{Memory overhead on real-world graphs (excluding the size of the graph itself).}
	\label{fig:memory} 
\end{figure}

\begin{table*}[t!]
	\caption{Runtime (in seconds) and conductance  on  real-world graphs. The best result in each metric is highlighted in bold.} \vspace{-0.2cm}
	\centering
	\scalebox{1}{
		\begin{tabular}{c|ccccccc}
			\toprule
			\multicolumn{1}{c|}	{Runtime/Conductance} &DBLP& Youtube&Pokec&LJ&Orkut&Twitter\\
			\midrule
			\multirow{6}{*} \textit{SC} &56/\textbf{0.009}&552/0.006&165/0.002&14492/0.001&837/0.007 &130570/ 0.002\\
			
			\multirow{6}{*} \textit{ASC} &21/0.482&73/0.554&730/0.454&1185/0.423&3089/0.415&67409/0.602\\
			
			\multirow{6}{*} \textit{NIBBLE\_PPR} &\textbf{7}/0.130 &\textbf{18}/0.110 &156/0.184& 784/0.021&3768/0.009&5822/0.177\\
			
			\multirow{6}{*} \textit{HK\_Relax} &31/0.127 &138/0.113&1271/0.011&2144/0.036&6016/0.008&99613/0.026\\
			
			\multirow{6}{*} \textit{SimpleLocal} &108/\textbf{0.009}&596/0.006&213/0.002& 14807/0.001&16267/0.013&283745/0.004\\			
			
			\multirow{6}{*} \textit{CRD} &47/0.184&235/0.183&1231/0.218
			&1279/0.194
			&7261/0.141
			&38060/0.336\\				
			\midrule
			
			\multirow{6}{*} \textit{PCon\_core} &9/0.106&27/0.404&\textbf{116}/0.327&\textbf{266}/0.088&\textbf{500}/0.223 &\textbf{5486}/0.419\\			
			
			\multirow{6}{*} \textit{PCon\_de} &11/0.027&39/\textbf{0.004}&141/\textbf{0.001}& 345/\textbf{0.000}&577/\textbf{0.006}&7655/\textbf{0.000}\\	
			\bottomrule	
	\end{tabular}}
	\label{table:metric}
\end{table*}

\begin{figure*}[t!]
	\centering
	\subfigure[Scalability testing on \textit{ER} synthetic graph]{
	\includegraphics[width=0.32\textwidth]{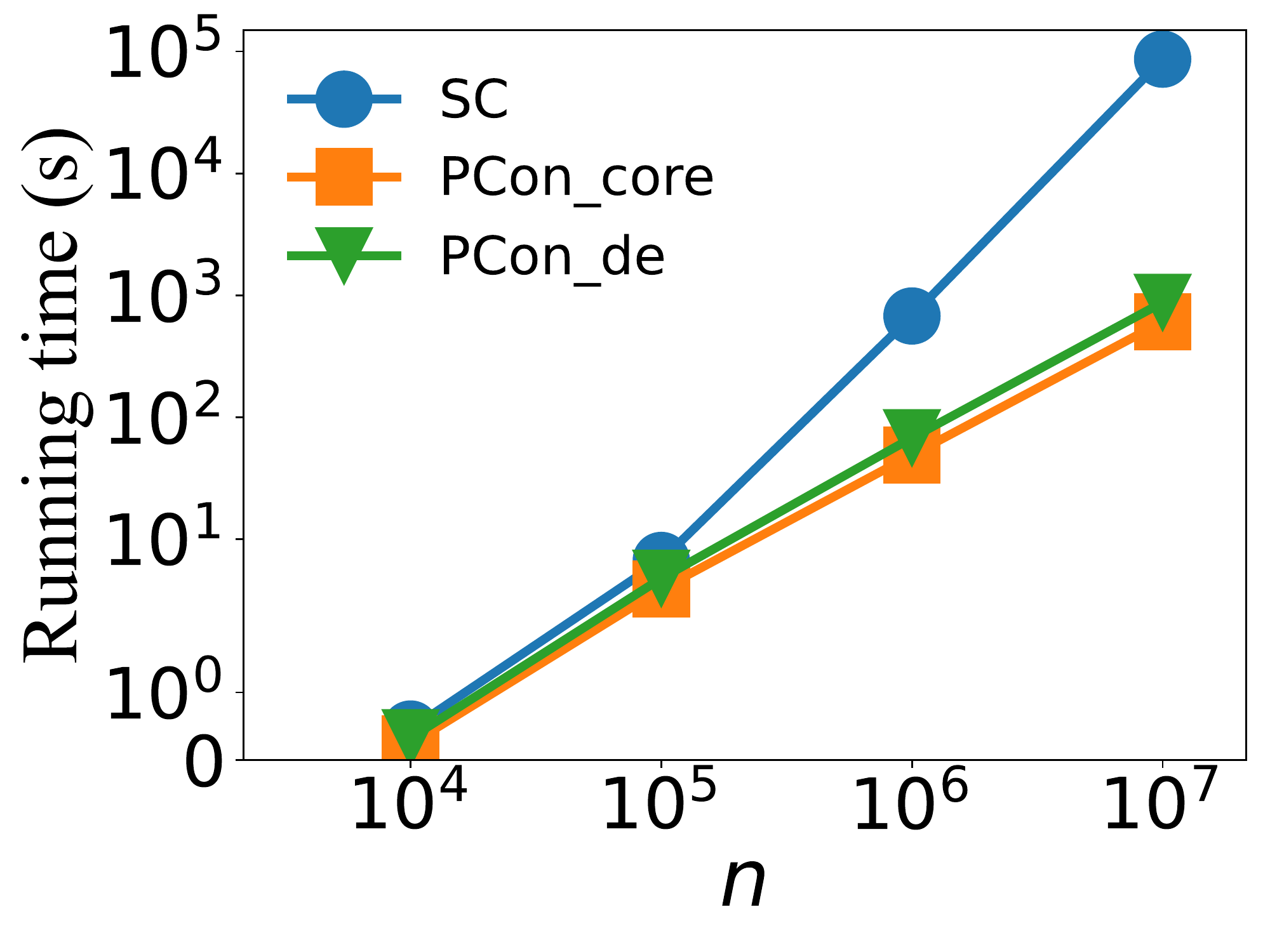}
	\label{fig:sca(a)}
}
	\subfigure[\textit{LFR} synthetic graphs varying $\mu$]{
		\includegraphics[width=0.32\textwidth]{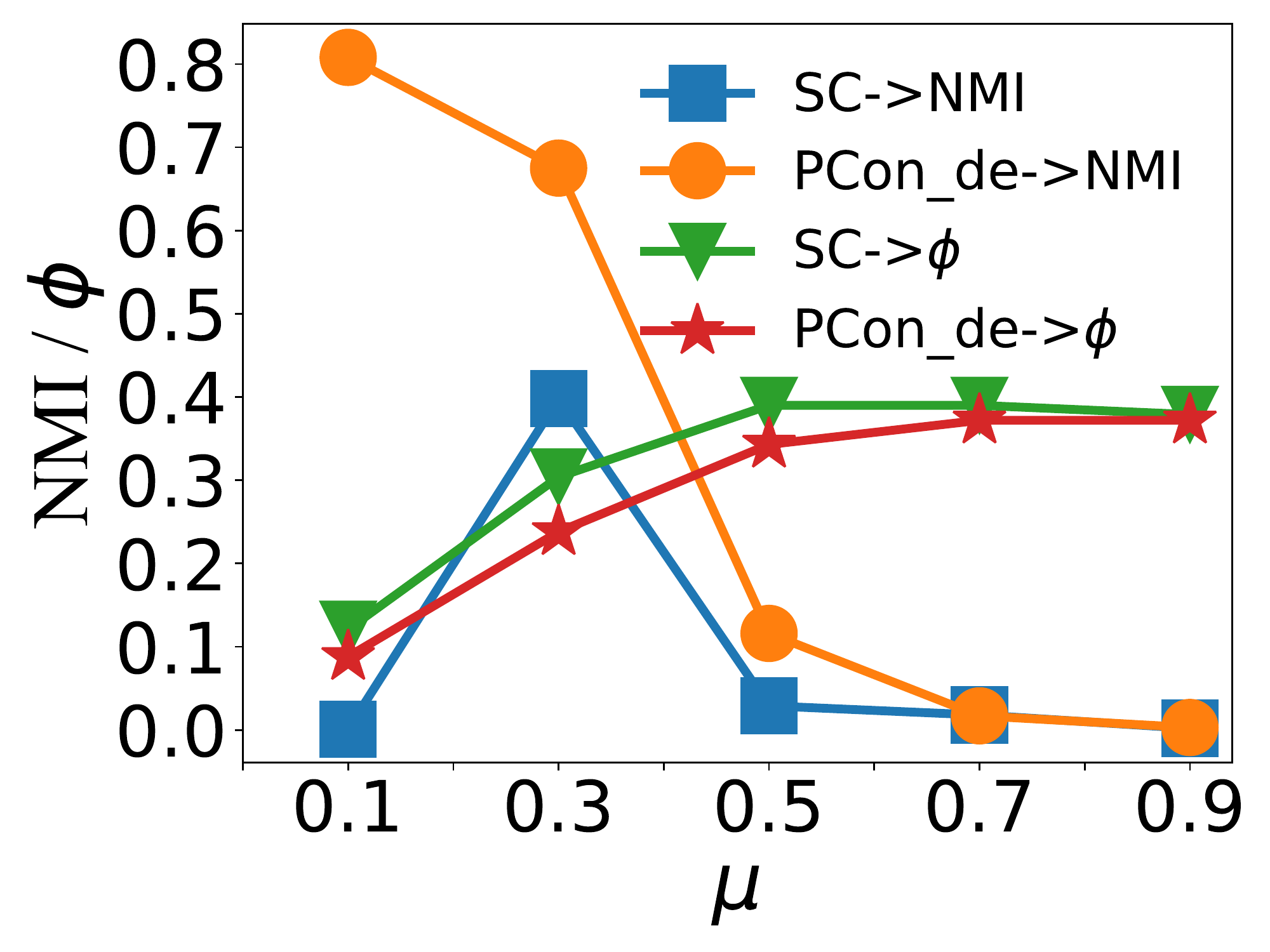}
		\label{fig:quality(a)}
	}
	\subfigure[\textit{LFR} synthetic graphs varying $n$]{
		\includegraphics[width=0.32\textwidth]{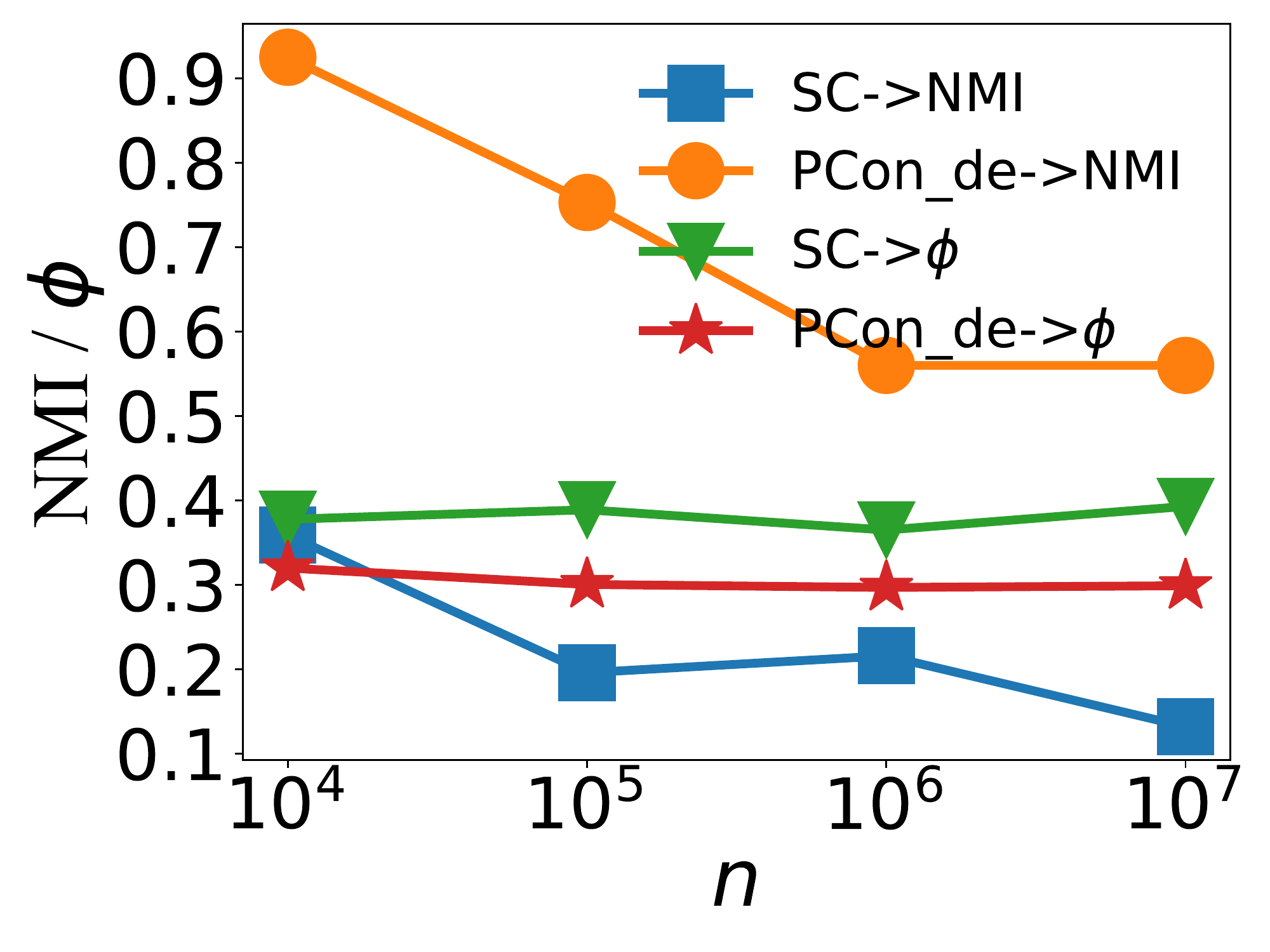}
		\label{fig:quality(b)}
	}
\vspace*{-0.3cm}
	\caption{Results on synthetic graphs.}
	\label{fig:quality}
\end{figure*}  

\subsection{Results on real-world graphs}  
Table \ref{table:metric} reports runtime and conductance for each method on real-world graphs. We have the following observations: (1) The proposed algorithms \textit{PCon\_core} and \textit{PCon\_de} are consistently faster than other methods on most graphs, especially for larger graphs. In particular, they achieve the speedups of 5, 14, 42, and 17 times over \textit{SC} on DBLP, Youtube, LJ, and Twitter, respectively.    For example, on Twitter with  more than a few billion edges, \textit{PCon\_core} and \textit{PCon\_de} take 5,486 seconds and 7,655 seconds to obtain the result, respectively, while \textit{SC} takes 130,570 seconds. (2) \textit{PCon\_core} is slightly fast than \textit{PCon\_de} but \textit{PCon\_core} has poor conductance value. (3) The clustering quality returned by \textit{PCon\_de} outperforms other methods on five of the six datasets. Besides, \textit{PCon\_de} always outperforms \textit{NIBBLE\_PPR} and \textit{HK\_Relax}. This is because  \textit{PCon\_de} can find clusters with near-liner approximation ratio, while \textit{Fiedler} vector-based spectral clustering algorithms (i.e., \textit{SC} and \textit{ASC}) have quadratic chegger bound and  diffusion-based local clustering algorithms (i.e., \textit{NIBBLE\_PPR} and \textit{HK\_Relax}) have no guarantee of clustering quality  (Table \ref{tab:alg}). Therefore, these results give clear evidences that the proposed algorithms can achieve significant speedup with high clustering quality compared with the baselines.

We  use the  normalized mutual information (\textit{NMI} for short) \cite{DBLP:journals/tit/CilibrasiV05} to measure how ``close" each detected cluster is  to the  ground-truth one. Note that for a cluster $C$,  the larger  the \textit{NMI}, the better the quality of the cluster $C$.  Figure \ref{fig:gr} shows \textit{NMI}  scores of different methods on real-world graphs with ground-truth clusters. As can be seen,  \textit{PCon\_de} consistently outperforms other methods. The \textit{NMI} scores of all methods other than \textit{PCon\_de} vary significantly depending on the dataset. For example, \textit{SC}, \textit{SimpleLocal}, and \textit{CRD}  are almost zero on DBLP and Youtube, but they have relatively good \textit{NMI} scores on Orkut. Therefore, these results  imply that \textit{PCon\_de}  approximates the ground-truth clusters more effectively than all other methods.

From Figure \ref{fig:memory}, we can see that memory overhead of \textit{PCon\_core} and \textit{PCon\_de} are consistently less than \textit{SC} on all datasets. In particular, they save 1.4 $\sim$ 7.8 times  memory overhead compared to \textit{SC}.  For example, on Orkut graph, \textit{PCon\_core} and \textit{PCon\_de} consume about 535.88MB  and 746.26MB to obtain the result, respectively, while \textit{SC} consumes 4209.25MB. This is because \textit{PCon\_core} and \textit{PCon\_de} only need linear space complexity to calculate the score of each vertex. However,  \textit{SC} adopts the eigenvector of the matrix to compute the score of each vertex, thus it requires squared space complexity in the worst case. These results demonstrate that \textit{PCon\_core} and \textit{PCon\_de} can identify clusters with low memory.

\subsection{Results on synthetic graphs}
We further use synthetic graphs to evaluate the scalability and  effectiveness of our proposed solutions. In particular, we only present the scalability for  \textit{ER} \cite{erdos1960evolution} with varying the number of vertices in Figure \ref{fig:sca(a)}, but all other  synthetic graphs have similar trends. As can be seen, \textit{PCon\_core} and \textit{PCon\_de} scale near-linear with respect to the size of the graphs. However, \textit{SC} has poor scalability because its time cost fluctuates greatly as the graph size increases.  These results indicate that our proposed algorithms can handle massive graphs while \textit{SC} cannot.

The \textit{LFR} model \cite{lancichinetti2009detecting} is a widely used benchmark with ground-truth clusters. In Figure \ref{fig:quality(a)}, we generate five synthetic graphs composed of 1,000,000 vertices by varying the mixing ratio $\mu$ from 0.1 to 0.9. A larger $\mu$ implies that the number of edges crossing clusters increases, resulting in that being more difficult to detect intrinsic clusters. As can be seen,  \textit{PCon\_de} consistently outperforms \textit{SC} in terms of \textit{NMI} and $\phi$. Meanwhile, the quality of \textit{PCon\_de} decreases as $\mu$ increases, but it is always better than \textit{SC}. Furthermore, we also generate four graphs with varying the number of vertices when fixing $\mu=0.4$. Figure \ref{fig:quality(b)} shows a similar trend as Figure \ref{fig:quality(a)}. As a consequence, these results  imply that \textit{PCon\_de}  approximates the ground-truth clusters more effectively than \textit{SC}.

\section{Conclusion}
In this paper, we devise a \textit{peeling}-based  computing framework \textit{PCon} for conductance-based graph clustering. We observe that most state-of-the-art algorithms can be reduced to \textit{PCon}. Inspired by our framework, we first propose an efficient heuristic algorithm  \textit{PCon\_core}, which adopts the degeneracy ordering to detect clusters. To improve the accuracy, we further propose a powerful \emph{PCon\_de} with near-constant approximation ratio, which achieve an important theoretical improvement over existing approaches such as \textit{Fiedler} vector-based spectral clustering. Finally, extensive experiments on eleven datasets with six competitors demonstrate that the proposed algorithms can achieve 5$\sim$42 times speedup  with a high accuracy and 1.4$\sim$7.8 times less memory than  the state-of-the-art solutions.

\section{Acknowledgments}
This work is supported by (i) National Key R$\&$D Program of China 2021YFB3301300, (ii) NSFC Grants U2241211, 62072034,  U1809206, (iii) Fundamental Research Funds for the Central Universities under Grant SWU-KQ22028, (iv) University Innovation Research Group of Chongqing CXQT21005, (v) Industry-University-Research Innovation Fund for Chinese Universities  2021ALA03016, and (vi) CCF-Huawei Populus Grove Fund. Rong-Hua Li is the corresponding author of this paper.

\bibliography{aaai23}

\begin{thebibliography}{35}
\providecommand{\natexlab}[1]{#1}

\bibitem[{Alon and Milman(1985)}]{alon1985lambda1}
Alon, N.; and Milman, V.~D. 1985.
\newblock $\lambda_1$, isoperimetric inequalities for graphs, and
  superconcentrators.
\newblock \emph{Journal of Combinatorial Theory, Series B}, 38(1): 73--88.

\bibitem[{Andersen, Chung, and Lang(2006)}]{DBLP:conf/focs/AndersenCL06}
Andersen, R.; Chung, F. R.~K.; and Lang, K.~J. 2006.
\newblock Local Graph Partitioning using PageRank Vectors.
\newblock In \emph{FOCS}, 475--486.

\bibitem[{Arora, Rao, and Vazirani(2004)}]{DBLP:conf/stoc/AroraRV04}
Arora, S.; Rao, S.; and Vazirani, U.~V. 2004.
\newblock Expander flows, geometric embeddings and graph partitioning.
\newblock In Babai, L., ed., \emph{STOC}. {ACM}.

\bibitem[{Barab{\'a}si and Albert(1999)}]{barabasi1999emergence}
Barab{\'a}si, A.-L.; and Albert, R. 1999.
\newblock Emergence of scaling in random networks.
\newblock \emph{science}, 286(5439): 509--512.

\bibitem[{Batagelj and Zaversnik(2003)}]{DBLP:journals/corr/cs-DS-0310049}
Batagelj, V.; and Zaversnik, M. 2003.
\newblock An O(m) Algorithm for Cores Decomposition of Networks.
\newblock \emph{CoRR}, cs.DS/0310049.

\bibitem[{Belkin and Niyogi(2001)}]{DBLP:conf/nips/BelkinN01}
Belkin, M.; and Niyogi, P. 2001.
\newblock Laplacian Eigenmaps and Spectral Techniques for Embedding and
  Clustering.
\newblock In \emph{NIPS}, 585--591.

\bibitem[{Bianchi, Grattarola, and Alippi(2020)}]{DBLP:conf/icml/BianchiGA20}
Bianchi, F.~M.; Grattarola, D.; and Alippi, C. 2020.
\newblock Spectral Clustering with Graph Neural Networks for Graph Pooling.
\newblock In \emph{ICML}, 874--883.

\bibitem[{Chang and Qin(2019)}]{DBLP:conf/icde/ChangQ19}
Chang, L.; and Qin, L. 2019.
\newblock Cohesive Subgraph Computation Over Large Sparse Graphs.
\newblock In \emph{ICDE}, 2068--2071.

\bibitem[{Chawla et~al.(2006)Chawla, Krauthgamer, Kumar, Rabani, and
  Sivakumar}]{DBLP:journals/cc/ChawlaKKRS06}
Chawla, S.; Krauthgamer, R.; Kumar, R.; Rabani, Y.; and Sivakumar, D. 2006.
\newblock On the Hardness of Approximating Multicut and Sparsest-Cut.
\newblock \emph{Comput. Complex.}, 15(2): 94--114.

\bibitem[{Cilibrasi and Vit{\'{a}}nyi(2005)}]{DBLP:journals/tit/CilibrasiV05}
Cilibrasi, R.; and Vit{\'{a}}nyi, P. M.~B. 2005.
\newblock Clustering by compression.
\newblock \emph{{IEEE} Trans. Inf. Theory}, 51(4): 1523--1545.

\bibitem[{Erdos, R{\'e}nyi et~al.(1960)}]{erdos1960evolution}
Erdos, P.; R{\'e}nyi, A.; et~al. 1960.
\newblock On the evolution of random graphs.
\newblock \emph{Publ. Math. Inst. Hung. Acad. Sci}, 5(1): 17--60.

\bibitem[{Fiedler(1973)}]{fiedler1973algebraic}
Fiedler, M. 1973.
\newblock Algebraic connectivity of graphs.
\newblock \emph{Czechoslovak mathematical journal}, 23(2): 298--305.

\bibitem[{Fortunato(2009)}]{Fortunato2009Community}
Fortunato, S. 2009.
\newblock Community detection in graphs.
\newblock \emph{Physics Reports}, 486(3): 75--174.

\bibitem[{Fountoulakis, Wang, and Yang(2020)}]{DBLP:conf/icml/Fountoulakis0Y20}
Fountoulakis, K.; Wang, D.; and Yang, S. 2020.
\newblock p-Norm Flow Diffusion for Local Graph Clustering.
\newblock In \emph{ICML}, volume 119, 3222--3232.

\bibitem[{Hasenplaugh et~al.(2014)Hasenplaugh, Kaler, Schardl, and
  Leiserson}]{DBLP:conf/spaa/HasenplaughKSL14}
Hasenplaugh, W.; Kaler, T.; Schardl, T.~B.; and Leiserson, C.~E. 2014.
\newblock Ordering heuristics for parallel graph coloring.
\newblock In Blelloch, G.~E.; and Sanders, P., eds., \emph{SPAA}.

\bibitem[{Holme and Kim(2002)}]{holme2002growing}
Holme, P.; and Kim, B.~J. 2002.
\newblock Growing scale-free networks with tunable clustering.
\newblock \emph{Physical review E}, 65(2): 026107.

\bibitem[{Kloster and Gleich(2014)}]{DBLP:conf/kdd/KlosterG14}
Kloster, K.; and Gleich, D.~F. 2014.
\newblock Heat kernel based community detection.
\newblock In \emph{KDD}, 1386--1395.

\bibitem[{Lancichinetti, Fortunato, and
  Kert{\'e}sz(2009)}]{lancichinetti2009detecting}
Lancichinetti, A.; Fortunato, S.; and Kert{\'e}sz, J. 2009.
\newblock Detecting the overlapping and hierarchical community structure in
  complex networks.
\newblock \emph{New journal of physics}, 11(3): 033015.

\bibitem[{Leighton and Rao(1999)}]{leighton1999multicommodity}
Leighton, T.; and Rao, S. 1999.
\newblock Multicommodity max-flow min-cut theorems and their use in designing
  approximation algorithms.
\newblock \emph{Journal of the ACM (JACM)}, 46(6): 787--832.

\bibitem[{Leskovec, Lang, and Mahoney(2010)}]{DBLP:conf/www/LeskovecLM10}
Leskovec, J.; Lang, K.~J.; and Mahoney, M.~W. 2010.
\newblock Empirical comparison of algorithms for network community detection.
\newblock In \emph{WWW}, 631--640.

\bibitem[{Li et~al.(2020)Li, Gao, Qin, Wang, Yang, and
  Yu}]{DBLP:journals/pvldb/LiGQWYY20}
Li, R.; Gao, S.; Qin, L.; Wang, G.; Yang, W.; and Yu, J.~X. 2020.
\newblock Ordering Heuristics for k-clique Listing.
\newblock \emph{Proc. {VLDB} Endow.}, 13(11): 2536--2548.

\bibitem[{Newman(2004)}]{newman2004fast}
Newman, M.~E. 2004.
\newblock Fast algorithm for detecting community structure in networks.
\newblock \emph{Physical review E}, 69(6): 066133.

\bibitem[{Orecchia and Zhu(2014)}]{DBLP:conf/soda/OrecchiaZ14}
Orecchia, L.; and Zhu, Z.~A. 2014.
\newblock Flow-Based Algorithms for Local Graph Clustering.
\newblock In Chekuri, C., ed., \emph{SODA}, 1267--1286. {SIAM}.

\bibitem[{Shi and Malik(1997)}]{DBLP:conf/cvpr/ShiM97}
Shi, J.; and Malik, J. 1997.
\newblock Normalized Cuts and Image Segmentation.
\newblock In \emph{CVPR}, 731--737.

\bibitem[{Shun et~al.(2016)Shun, Roosta{-}Khorasani, Fountoulakis, and
  Mahoney}]{DBLP:journals/pvldb/ShunRFM16}
Shun, J.; Roosta{-}Khorasani, F.; Fountoulakis, K.; and Mahoney, M.~W. 2016.
\newblock Parallel Local Graph Clustering.
\newblock \emph{Proc. {VLDB} Endow.}, 9(12): 1041--1052.

\bibitem[{Spielman and Teng(2004)}]{DBLP:conf/stoc/SpielmanT04}
Spielman, D.~A.; and Teng, S. 2004.
\newblock Nearly-linear time algorithms for graph partitioning, graph
  sparsification, and solving linear systems.
\newblock In \emph{STOC}, 81--90.

\bibitem[{Tolliver and Miller(2006)}]{DBLP:conf/cvpr/TolliverM06}
Tolliver, D.; and Miller, G.~L. 2006.
\newblock Graph Partitioning by Spectral Rounding: Applications in Image
  Segmentation and Clustering.
\newblock In \emph{CVPR}, 1053--1060.

\bibitem[{Trevisan(2017)}]{trevisan2017lecture}
Trevisan, L. 2017.
\newblock Lecture notes on graph partitioning, expanders and spectral methods.
\newblock \emph{University of California, Berkeley, https://people. eecs.
  berkeley. edu/luca/books/expanders-2016. pdf}.

\bibitem[{Veldt, Gleich, and Mahoney(2016)}]{veldt2016simple}
Veldt, N.; Gleich, D.; and Mahoney, M. 2016.
\newblock A simple and strongly-local flow-based method for cut improvement.
\newblock In \emph{ICML}, 1938--1947.

\bibitem[{Wang et~al.(2017)Wang, Fountoulakis, Henzinger, Mahoney, and
  Rao}]{DBLP:conf/icml/WangFHMR17}
Wang, D.; Fountoulakis, K.; Henzinger, M.; Mahoney, M.~W.; and Rao, S. 2017.
\newblock Capacity Releasing Diffusion for Speed and Locality.
\newblock In \emph{ICML}, volume~70, 3598--3607.

\bibitem[{Watts and Strogatz(1998)}]{watts1998collective}
Watts, D.~J.; and Strogatz, S.~H. 1998.
\newblock Collective dynamics of ‘small-world’ networks.
\newblock \emph{nature}, 393(6684): 440--442.

\bibitem[{Xu et~al.(2007)Xu, Yuruk, Feng, and
  Schweiger}]{DBLP:conf/kdd/XuYFS07}
Xu, X.; Yuruk, N.; Feng, Z.; and Schweiger, T. A.~J. 2007.
\newblock {SCAN:} a structural clustering algorithm for networks.
\newblock In \emph{KDD}, 824--833.

\bibitem[{Yang and Leskovec(2015)}]{DBLP:journals/kais/YangL15}
Yang, J.; and Leskovec, J. 2015.
\newblock Defining and evaluating network communities based on ground-truth.
\newblock \emph{Knowl. Inf. Syst.}, 42(1): 181--213.

\bibitem[{Yang et~al.(2019)Yang, Xiao, Wei, Bhowmick, Zhao, and
  Li}]{DBLP:conf/sigmod/YangXWBZL19}
Yang, R.; Xiao, X.; Wei, Z.; Bhowmick, S.~S.; Zhao, J.; and Li, R. 2019.
\newblock Efficient Estimation of Heat Kernel PageRank for Local Clustering.
\newblock In \emph{SIGMOD}, 1339--1356.

\bibitem[{Zhu, Lattanzi, and Mirrokni(2013)}]{DBLP:conf/icml/ZhuLM13}
Zhu, Z.~A.; Lattanzi, S.; and Mirrokni, V.~S. 2013.
\newblock A Local Algorithm for Finding Well-Connected Clusters.
\newblock In \emph{ICML}, 396--404.

\end{thebibliography}

\begin{table*}[t!]
	\caption{Conductance of diffusion-based local clustering with varying parameter $\epsilon$.} \vspace{-0.2cm}
	\centering
	\scalebox{1}{
		\begin{tabular}{c|ccccccc}
			\toprule
			\multicolumn{1}{c|}	{Dataset} &Methods &$\frac{10}{m}$& $\frac{100}{m}$&$\frac{1000}{m}$&$\frac{10000}{m}$&$\frac{100000}{m}$&$\frac{1000000}{m}$\\
			\midrule
			\multirow{2}{*}{DBLP}  & \textit{NIBBLE\_PPR} &0.237&0.331&0.345&0.394&0.833 &1.000\\
			&\textit{HK\_Relax}	&0.138&0.275&0.324&0.394&0.240&0.410\\
			\midrule
			\multirow{2}{*}{Youtube}  &\textit{NIBBLE\_PPR}&0.295&0.281&0.478&0.704&0.888 &0.777\\
			&\textit{HK\_Relax}	&0.121&0.332&0.197&0.360&0.461 &0.777\\
			\midrule
			\multirow{2}{*}{Pokec}  &\textit{NIBBLE\_PPR}&0.587&0.741&0.667&0.761&0.796 &1.000\\
			&\textit{HK\_Relax}&0.051&0.243&0.437&0.691&0.615 &0.886\\
			\midrule
			\multirow{2}{*}{LJ}  & \textit{NIBBLE\_PPR} &0.244&0.439&0.486&0.580&0.685 &0.783\\
			&\textit{HK\_Relax}	&0.038&0.1998&0.170&0.387&0.294 &0.585\\
			\midrule
			\multirow{2}{*}{Orkut}  &\textit{NIBBLE\_PPR}&0.654&0.886&0.767&0.924&0.913 &0.974\\
			&\textit{HK\_Relax}	&0.008&0.036&0.409&0.690&0.649 &0.705\\
			\midrule
			\multirow{2}{*}{Twitter}  &\textit{NIBBLE\_PPR}&0.819&0.906&0.912&0.931&0.947&0.952\\
			&\textit{HK\_Relax}	&0.054&0.102&0.205&0.301&0.403 &0.506\\
			\bottomrule	
	\end{tabular}}
	\label{table:metric_add}
\end{table*}

\begin{figure*}[t!]
	\centering
	\subfigure[\textit{LFR} synthetic graph]{
		\includegraphics[width=0.22\textwidth]{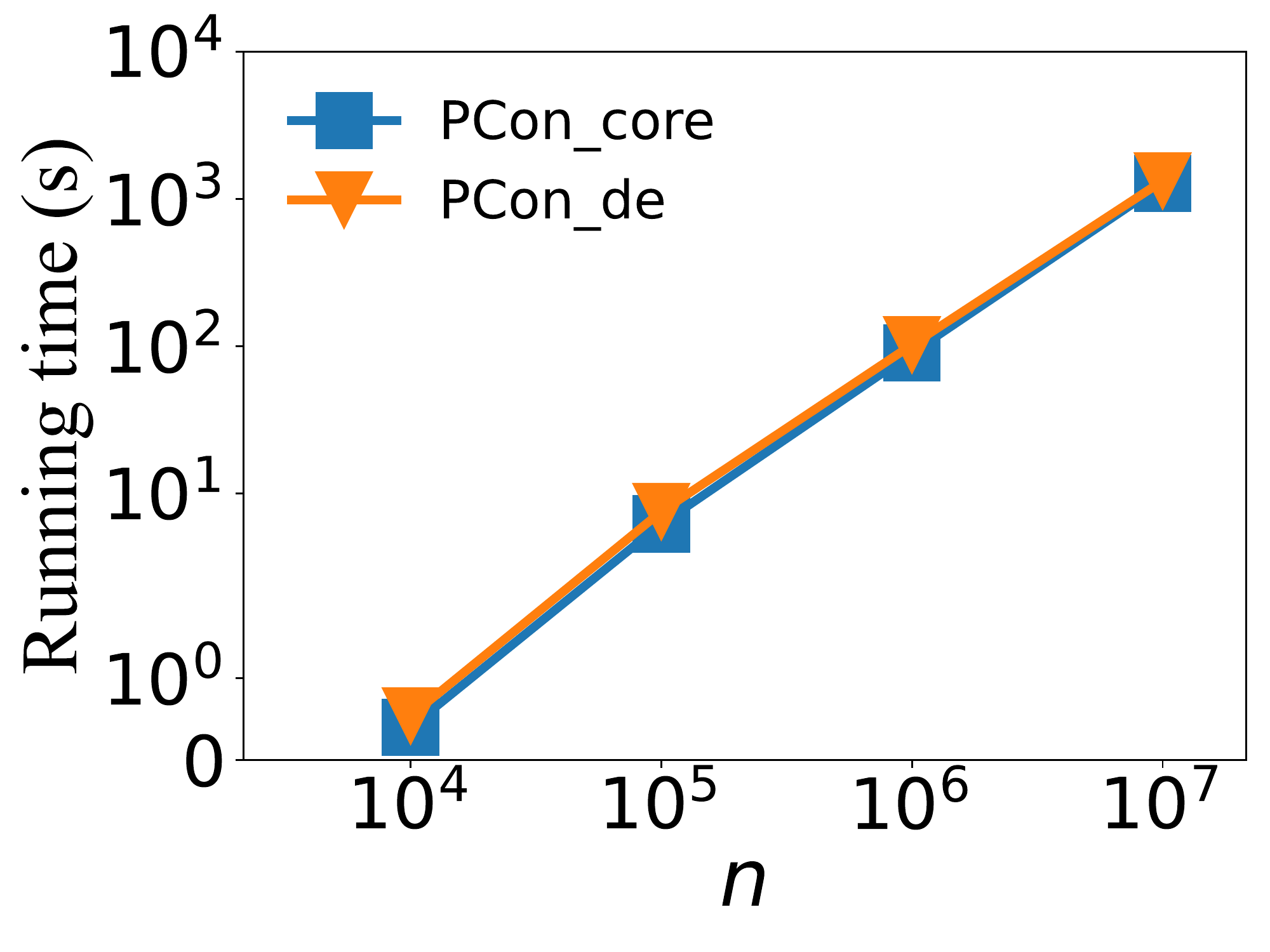}
		\label{fig:sca1(a)}
	}
	\subfigure[\textit{WS} synthetic graph]{
		\includegraphics[width=0.22\textwidth]{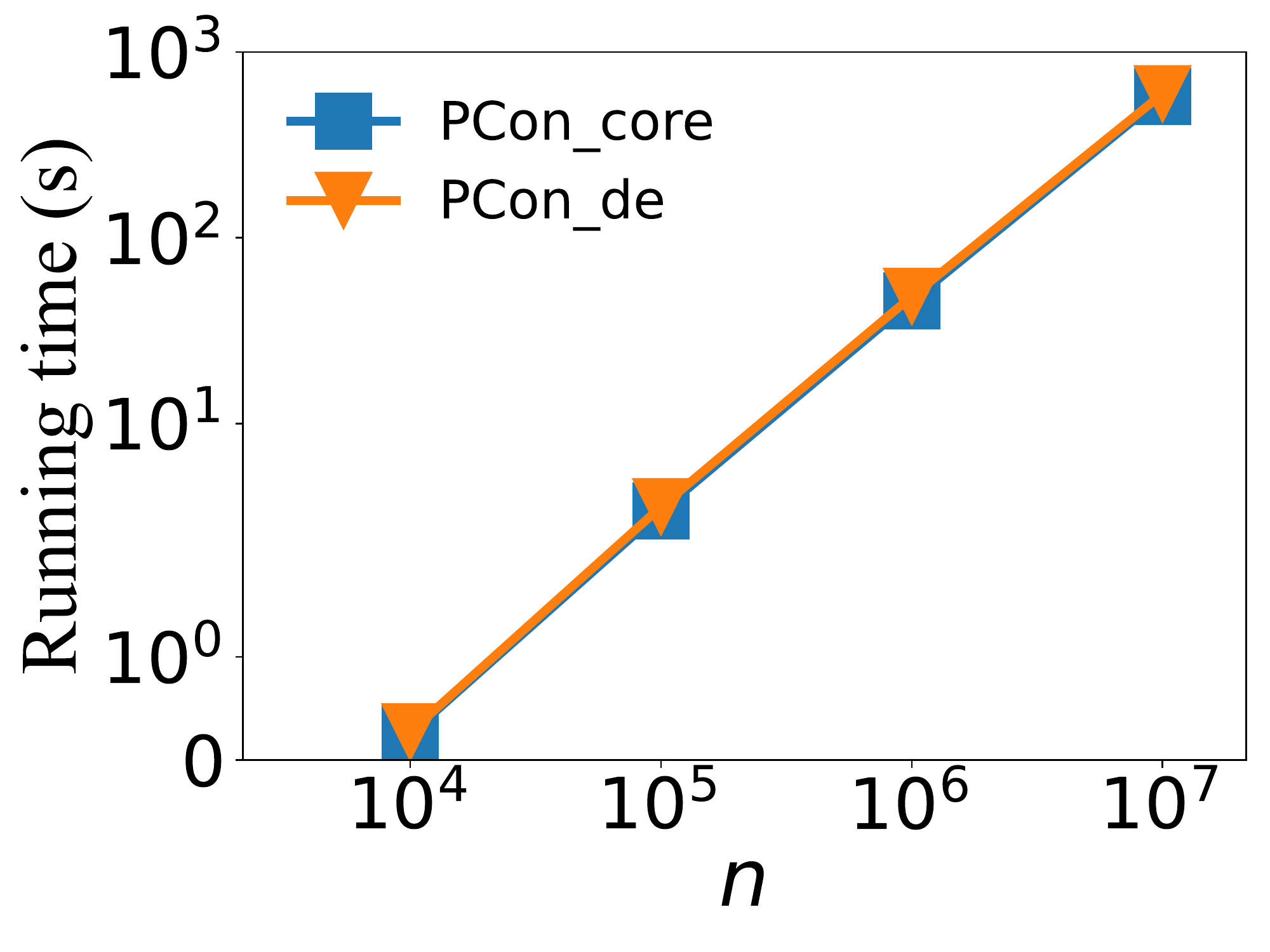}
		\label{fig:sca1(b)}
	}
	\subfigure[\textit{PLC} synthetic graph]{
		\includegraphics[width=0.22\textwidth]{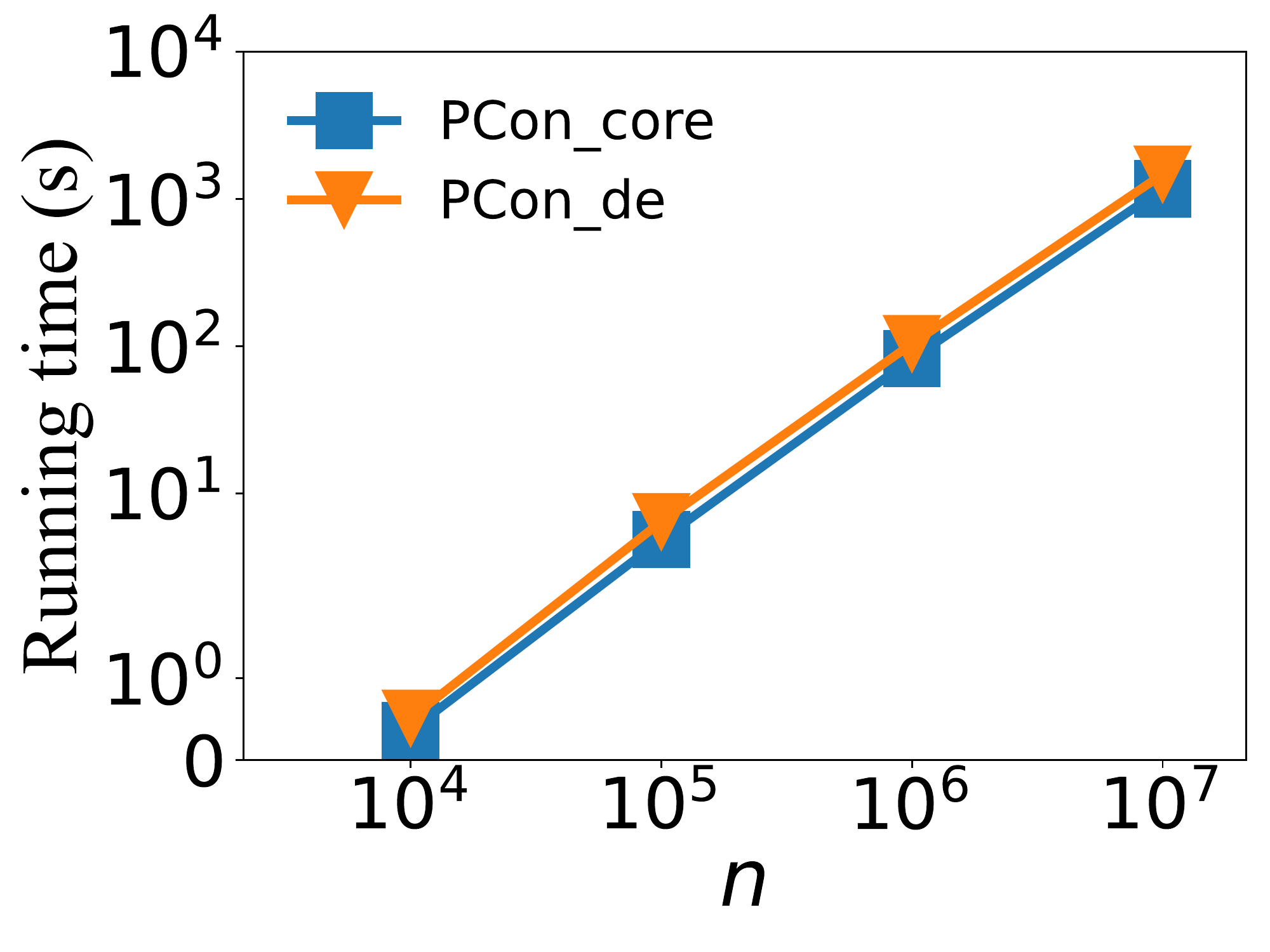}
		\label{fig:sca1(c)}
	}
	\subfigure[\textit{BA} synthetic graph]{
		\includegraphics[width=0.22\textwidth]{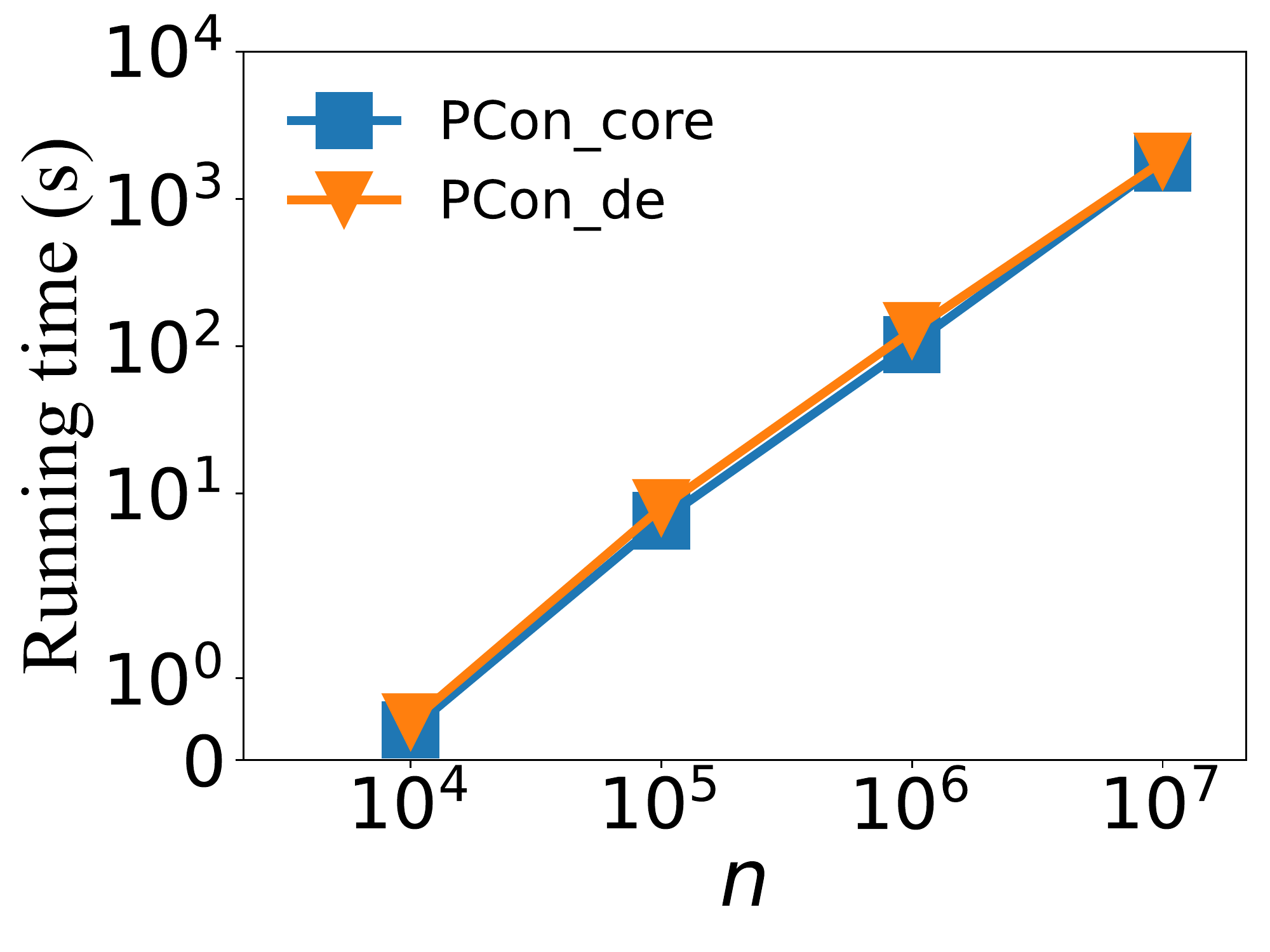}
		\label{fig:sca1(d)}
	}
	\vspace*{-0.3cm}
	\caption{Scalability testing on synthetic graphs.}
	\label{fig:scal}
\end{figure*}

\subsection{Additional results}

Table \ref{table:metric_add} reports the conductance quality   of diffusion-based local clustering with varying parameter $\epsilon$. Specifically, we vary  $\epsilon$ in $\{\frac{10}{m},\frac{100}{m},\frac{1000}{m},\frac{10000}{m},\frac{100000}{m},\frac{1000000}{m}\}$, in which  $m$ is the number of edges of the corresponding graph. As can be seen, \textit{PCon\_core} outperforms diffusion-based local clustering in most parameter settings (see Table \ref{table:metric}  and Table \ref{table:metric_add}). Furthermore, since the parameter $\epsilon$ controls how much of the graph is explored by their respective method, the conductance quality gets  worse as $\epsilon$ increases in most cases. Thus,
these experimental results imply that \textit{PCon\_core} is faster in time than \textit{Fiedler}
vector-based spectral clustering and better in conductance quality than diffusion-based local clustering.

We use four synthetic graphs \textit{LFR} \cite{lancichinetti2009detecting}, \textit{WS} \cite{watts1998collective}, \textit{PLC} \cite{holme2002growing}, and \textit{BA} \cite{barabasi1999emergence} to evaluate the scalability of our proposed solutions. Figure \ref{fig:scal} shows the results. As can be seen, \textit{PCon\_core} and \textit{PCon\_de} scale near-linear with respect to the size of the graphs. Thus, these results indicate that our proposed algorithms can handle massive graphs.

\end{document}